\newtheorem{definition}{Definition}
\newtheorem{theorem}{Theorem}
\newtheorem{lemma}{Lemma}
\newtheorem*{thmA}{Theorem A}
\newtheorem*{thmB}{Theorem B}
\newtheorem*{propertyA}{Property A}
\newtheorem{obs}{Observation}
\newtheorem{corollary}{Corollary}
\newtheorem{remark}{Remark}
\begin{document}

\preprint{APS/123-QED}

\title{The Diagonal distance of Codeword Stabilized Codes}

\author{Upendra S. Kapshikar}
 \email{e0382999@u.nus.edu}
\affiliation{%
 Centre for Quantum Technologies, \\
 National University of Singapore. 
}%




\date{\today}

\begin{abstract}
{Quantum degeneracy in error correction is a feature unique to quantum error correcting codes, unlike their classical counterpart. 
It allows a quantum error correcting code to correct errors even when they can not uniquely pinpoint the error.
The diagonal distance of a quantum code is an important parameter that characterizes whether the quantum code is degenerate or not.
If the code has a distance more than the diagonal distance, then it is degenerate; whereas if it is below the diagonal distance, then it is nondegenerate.
We show that most of the CWS codes without a cycle of length 4 attain the upper bound of diagonal distance $\delta+1$ where $\delta$ is the minimum vertex degree of the associated graph.
Addressing the question of degeneracy, we give necessary conditions on CWS codes to be degenerate.
We show that any degenerate CWS code with graph $G$ and classical code $C$, will either have a short cycle in graph $G$ or will be such that the classical code $C$ has one of the coordinates trivially zero for all codewords.}
\end{abstract}

\maketitle


\section{\label{sec:intro} Introduction}
In 1995, Shor~\cite{Shor_coding} showed that similar to classical computation, quantum computation can also be supported by an error correcting code. Soon after that, Calderbank and Shor~\cite{CS} and Steane~\cite{steane} came up with a standard procedure to combine two classical error correcting codes to obtain a quantum error correcting code.
 In his seminal work~\cite{Dan_thesis} Gottesman formalized the \emph{stabilizer} set up, giving a group theoretic framework to the study of quantum error correction.
 Although quite a few non-stabilizer codes were introduced later~\cite{nonstabilizer1,nonstabilizer2}, many of the good and popular error correcting codes, such as toric codes~\cite{toric_kitaev}, surface codes that are useful for large computation~\cite{Fowler} and recent quantum LDPC codes~\cite{RyanLD},  still come from the class of stabilizer codes.
 Right from the starting works of Calderbank, Shor and Steane, one of the most important directions for constructing new quantum codes has been fruitful use of classical error correction.
One of the most popular ways to obtain quantum codes is by using two classical codes such that one is contained in the dual of the other.
Codeword stabilized (CWS) codes~\cite{CWS} present another way to use classical codes.
Codeword stabilized codes are quantum codes made out of two classical objects: a graph, and a code.
The class of CWS codes contains all the stabilizer codes and also encompasses some non-stabilizer (or non-additive) quantum codes.
 
One of the main qualitative properties in which quantum codes differ from classical code is their `degeneracy'.
It allows a quantum code to correct more errors than it can uniquely identify. 
The Knill Laflamme theorem~\cite{Laflamme} that characterizes the distance of a quantum code intrinsically also considers this degenerate and nondegenerate nature of errors in the quantum setting.  
Consider a quantum code $\mathcal{M}$ with basis ${v_1,v_2,\ldots, v_{K}}$.
Then the Knill Laflamme condition for error correction involves looking at the error operator $E$ in the basis of $\mathcal{M}$, hence the entries $\langle v_i, E v_j \rangle$.
Roughly speaking, a code can uniquely identify all the errors $E$ where $\langle v_i, E v_j \rangle=0$.
Moreover, owing to quantum degeneracy, it can afford to have some overlap with the diagonal entries as long as it is constant over the diagonal for a fixed $E$ .
Let $\Delta^\prime(\mathcal{M})$ be the minimum (symplectic) weight of a Pauli error such that it has non-zero diagonal entries.
This quantity $\Delta^\prime$ is known as the diagonal distance of a quantum code.
The diagonal distance is a key combinatorial parameter of a quantum code.
All the codes with distance less than $\Delta^\prime$ are nondegenerate, whereas all those above it are degenerate.
Other than the theoretical interest of finding degenerate - nondegenerate boundaries of quantum error correction, diagonal distance also serves as an important tool for the construction of CWS codes by search~\cite{QuditGraph}. 
The method in~\cite{QuditGraph} fixes a graph and searches for good quantum codes for the fixed graph. 
Note that their method focuses only on searching for nondegenerate codes, and hence, it is favorable if we have a graph with a large diagonal distance since the distance of nondegenerate codes is upper bounded by their diagonal distance. Unfortunately, for a graph with minimum vertex degree $\delta$, its diagonal distance can at most be $\delta+1$.   
Nonetheless, one can ask a question, under what conditions can we go close to this upper bound of $\delta+1$.
Note that for a CWS code, its diagonal distance is wholly determined by its graph. 
Hence we use the terminology diagonal distance of a graph or code interchangeably for CWS codes.
In this work, we show that for a graph with no four cycles, we have a lower bound of $\delta$ on the diagonal distance.
Thus characterizing the diagonal distance of these graphs, we give the following theorem:
\begin{thmA} \label{thmA} Let $G$ be a graph with no four-cycle. 
If the minimum vertex degree of $G$ is $\delta$, then the diagonal distance is either $\delta$ or $\delta+1$. 
\end{thmA}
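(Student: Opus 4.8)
The plan is to prove Theorem A by supplying the one missing half: since the bound $\Delta'\le\delta+1$ is already known, it suffices to show that a four-cycle-free graph $G$ of minimum degree $\delta$ has diagonal distance $\Delta'\ge\delta$. The starting point is the combinatorial description of the diagonal distance of a CWS code purely in terms of its graph. Writing $\Gamma$ for the adjacency matrix of $G$ over $\mathbb{F}_2$, the Pauli errors with a nonzero diagonal entry are exactly the (signed) graph-state stabilizers $X^{a}Z^{\Gamma a}$, so that
\[
\Delta' \;=\; \min_{0\neq a\in\mathbb{F}_2^{n}} \bigl|\,\mathrm{supp}(a)\cup\mathrm{supp}(\Gamma a)\,\bigr|,
\]
where $(\Gamma a)_u=|N(u)\cap\mathrm{supp}(a)|\bmod 2$. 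Fixing a nonzero $a$ with support $A$ and $k=|A|$, the whole task reduces to the purely graph-theoretic claim that $|\mathrm{supp}(a)\cup\mathrm{supp}(\Gamma a)|\ge\delta$.

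First I would dispose of the easy regime: if $k\ge\delta$ then $|\mathrm{supp}(a)\cup\mathrm{supp}(\Gamma a)|\ge|A|=k\ge\delta$ already, so I may assume $1\le k\le\delta$. Here I would bound $|\mathrm{supp}(\Gamma a)|$ from below by a double count, over all vertices $u$, of the quantity $d_A(u):=|N(u)\cap A|$. Two relations drive the argument: summing degrees gives $\sum_u d_A(u)=\sum_{v\in A}\deg(v)\ge k\delta$, while counting common neighbours of pairs in $A$ gives $\sum_u\binom{d_A(u)}{2}=\sum_{\{v,v'\}\subseteq A}|N(v)\cap N(v')|\le\binom{k}{2}$, where the final inequality is exactly the no-four-cycle hypothesis, namely that every pair of distinct vertices has at most one common neighbour.

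The elementary pointwise inequality $d\le[\,d\text{ odd}\,]+2\binom{d}{2}$, valid for every integer $d\ge0$, ties these together: summing it over all $u$ and inserting the two bounds yields $k\delta\le|\mathrm{supp}(\Gamma a)|+2\binom{k}{2}$, that is $|\mathrm{supp}(\Gamma a)|\ge k(\delta-k+1)$. Since $k\mapsto k(\delta-k+1)$ is a downward parabola equal to $\delta$ at both endpoints $k=1$ and $k=\delta$, it is $\ge\delta$ throughout $1\le k\le\delta$; hence $|\mathrm{supp}(a)\cup\mathrm{supp}(\Gamma a)|\ge|\mathrm{supp}(\Gamma a)|\ge\delta$ in every case, giving $\Delta'\ge\delta$, which together with $\Delta'\le\delta+1$ completes the proof.

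I expect the main obstacle to be feeding in the no-four-cycle condition in the right way. A naive local argument, restricted to the closed neighbourhood of a single minimum-degree vertex and discarding ``bad'' neighbours one at a time, delivers only a bound of order $\delta/2$, because it cannot control the simultaneous presence of many $A$-internal edges and many shared external neighbours. The point I would stress is that the four-cycle constraint must be applied \emph{globally}, to all pairs in $A$ at once --- crucially including adjacent pairs, whose common neighbours are the apexes of triangles --- and that the weighting $d\le[\,d\text{ odd}\,]+2\binom{d}{2}$ is precisely what converts that global constraint, together with the degree sum, into the sharp lower bound $\delta$.
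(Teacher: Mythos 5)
Your proof is correct, and it takes a genuinely different route from the paper. The paper phrases everything in terms of the columns of $\left(\mathbb{I}\vert A_G\right)$: the no-four-cycle hypothesis becomes \textbf{Property A}, and the Main Lemma classifies, by induction on $\vert\Gamma\vert-\delta_\Gamma-1$ with extensive case analysis, every zero-sum subset of columns into five types ($O$, $A.1$, $A.2$, $B$, $C$), from which Theorem A follows by reading off the symplectic weight of each type. You start from the same kernel characterization $\Delta'=\min_{a\neq 0}\vert\mathrm{supp}(a)\cup\mathrm{supp}(A_G a)\vert$, but you replace the entire classification with one global double count: writing $A=\mathrm{supp}(a)$ and $k=\vert A\vert\le\delta$, the degree sum gives $\sum_u d_A(u)\ge k\delta$, four-cycle-freeness gives $\sum_u\binom{d_A(u)}{2}\le\binom{k}{2}$ (correctly applied to adjacent pairs as well, since two common neighbours of any pair create a four-cycle), and summing the pointwise inequality $d\le[\,d\ \mathrm{odd}\,]+2\binom{d}{2}$ yields $\vert\mathrm{supp}(A_G a)\vert\ge k\delta-k(k-1)=k(\delta-k+1)\ge\delta$, because this concave parabola equals $\delta$ at both endpoints $k=1$ and $k=\delta$; the regime $k\ge\delta$ is trivial. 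All steps check out, including the identification of $\sum_u[\,d_A(u)\ \mathrm{odd}\,]$ with $\vert\mathrm{supp}(A_G a)\vert$. What your argument buys is brevity and transparency: no induction, no case analysis, and the no-four-cycle condition enters exactly once, globally. What the paper's heavier machinery buys is the structure of the extremal configurations: its condition $B$ and Lemma~\ref{end_cor} pin down precisely which four-cycle-free graphs have $\Delta'=\delta$ (the highly symmetric subgraphs), and that exact characterization is what drives Theorem B and the degeneracy corollaries later in the paper. Your counting argument could be pushed to recover that equality analysis too --- equality forces $k=\delta$, all degrees in $A$ equal to $\delta$, every pair in $A$ sharing exactly one neighbour, $d_A(u)\le 2$ everywhere, and each $v\in A$ having exactly one neighbour inside $A$, which is the paper's symmetric configuration --- but as written you stop at the bound $\delta\le\Delta'\le\delta+1$, which is all Theorem A requires.
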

We would like to point out that we can further fine-tune \textbf{Theorem A} separating out cases with $\Delta^\prime=\delta$ and $\Delta^\prime=\delta+1$ (See Lemma~\ref{end_cor}). 
Thus we give a large class of graphs that attain the maximum diagonal distance bound.
 Furthermore, we also give necessary conditions for a quantum code to be degenerate.
 Informally, our theorem regarding degeneracy can be stated as follows: 
 \begin{thmB} For a degenerate CWS code $\mathcal{M}$ given by graph $G$ and classical code $C$, if $C$ uses all its components (that is, code is not trivially zero for all codewords in any of the components), then $G$ must contain a cycle of length 3 or 4. 
 \end{thmB}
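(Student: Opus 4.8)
The plan is to convert degeneracy into the existence of a short graph stabilizer that is invisible to the classical code $C$, and then feed the girth hypothesis through Theorem~A. Writing an arbitrary Pauli error as $E=\pm X^{a}Z^{b}$ and using the graph-state identity $X^{a}|G\rangle=\pm Z^{Ga}|G\rangle$, one gets $E\,Z^{c}|G\rangle=\pm Z^{\,c+Ga+b}|G\rangle$ on every codeword $Z^{c}|G\rangle$. Hence the diagonal entry $\langle c|E|c\rangle$ is nonzero precisely when $Ga+b=0$, i.e.\ when $E=\pm S^{a}$ with $S^{a}=X^{a}Z^{Ga}$, and then $\langle c|S^{a}|c\rangle=(-1)^{a\cdot c}$. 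Consequently $\Delta'=\min_{a\neq0}\lvert\mathrm{supp}(a)\cup\mathrm{supp}(Ga)\rvert$, and a weight-$\Delta'$ diagonal error acts on the code through the sign pattern $(-1)^{a\cdot c}$.

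Next I would record what degeneracy means in this language. If $\mathcal M$ is degenerate then its distance $d>\Delta'$, so no undetectable nontrivial operator has weight $\le\Delta'$. Fix any $a\neq0$ realising $\lvert\mathrm{supp}(a)\cup\mathrm{supp}(Ga)\rvert=\Delta'$: were the signs $(-1)^{a\cdot c}$ not constant over $c\in C$, then $S^{a}$ would be a genuine logical operator of weight $\Delta'\le d$, contradicting $d>\Delta'$. Therefore $a\cdot c$ is constant on $C$, and normalising so that $|G\rangle$ is a codeword (i.e.\ $0\in C$) this reads $a\cdot c=0$ for all $c$. Thus \emph{every} minimum-weight stabilizer is trivial on $C$, and its support $A=\mathrm{supp}(a)$ satisfies $\sum_{i\in A}c_i=0$ for all $c\in C$. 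The goal is to show that, when $G$ has neither a $3$- nor a $4$-cycle, this forces some coordinate of $C$ to vanish identically, contradicting that $C$ uses all of its components.

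Assume henceforth that $G$ is triangle-free and $4$-cycle-free, so Theorem~A gives $\Delta'\in\{\delta,\delta+1\}$. If $\Delta'=\delta+1$, pick a vertex $i$ of minimum degree; the single-vertex stabilizer $S^{e_i}=X_iZ^{N(i)}$ has weight exactly $\delta+1=\Delta'$, so it is one of the minimisers and is trivial on $C$. With $a=e_i$ this says $c_i=0$ for every $c\in C$, so component $i$ is unused — the desired contradiction. This disposes of the case $\Delta'=\delta+1$, which is essentially bookkeeping once the stabilizer/sign dictionary is in place.

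The remaining case $\Delta'=\delta$ is where the work lies, and here I would lean on the refinement Lemma~\ref{end_cor}, which pins down, for $4$-cycle-free graphs, exactly which supports $A$ can realise a weight-$\delta$ stabilizer. Now every minimiser has $\lvert A\rvert\ge2$ (a lone vertex already costs $\delta+1$), so the bare parity $\sum_{i\in A}c_i=0$ is a single linear constraint and does not by itself kill a coordinate. The plan is to use the \emph{whole} family of weight-$\delta$ stabilizers: since $S^{a}S^{a'}=\pm S^{a+a'}$ and $(a+a')\cdot c=a\cdot c+a'\cdot c$, the vectors $a$ of the trivial-on-code minimisers span an $\mathbb F_2$-space on which every element is again trivial on $C$; invoking the explicit description of these supports in the triangle- and $4$-cycle-free regime, I would argue that this span must contain a standard basis vector $e_i$, whence $c_i=0$ for all $c\in C$ and some component is again unused. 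The main obstacle is precisely this conversion — turning the high-girth geometry of weight-$\delta$ stabilizers into the algebraic statement that their $a$-vectors $\mathbb F_2$-generate a coordinate direction. As an alternative route to keep in reserve (should a single isolated weight-$\delta$ scalar stabilizer occur whose support carries genuine nonzero codeword bits), I expect that the girth-$\ge5$ structure together with ``$C$ uses all components'' lets one assemble an undetectable operator of weight $\le\delta$, forcing $d\le\delta=\Delta'$ and hence nondegeneracy, the contradiction sought. Making either conversion precise through Lemma~\ref{end_cor} is the crux; everything else is comparatively routine.
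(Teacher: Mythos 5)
Your setup and your handling of the case $\Delta'=\delta+1$ are correct, and that half coincides with the paper's first step (the paper invokes the standard fact that if $C$ uses bit $i$ then $d\leq wt(S_i)$; your argument is its contrapositive). The genuine gap is the case $\Delta'=\delta$, which you yourself flag as unresolved: neither of your two proposed rescues (showing that the $\mathbb{F}_2$-span of the $a$-vectors of the weight-$\delta$ minimizers contains some $e_i$, or assembling an undetectable operator of weight $\leq\delta$) is carried out, and both aim at the wrong target.

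What you missed is that Lemma~\ref{end_cor} already closes this case with no further algebra. For a $4$-cycle-free graph, $\Delta'=\delta$ holds \emph{only if} $G$ contains a vertex set $V'$ of size $\delta$ in which each vertex has exactly one neighbor inside $V'$ (condition 3 of that lemma) and every distinct pair of vertices of $V'$ is joined by a path of length two (condition 2). Take an adjacent pair $u,v\in V'$ and a length-two path $u$--$w$--$v$: the edges $uv$, $uw$, $wv$ form a triangle. Hence a graph that is both triangle-free and $4$-cycle-free can never have $\Delta'=\delta$; under your standing hypothesis the dichotomy of Theorem A collapses to $\Delta'=\delta+1$, and your first case finishes the proof. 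This is precisely the paper's route (Lemma~\ref{degeneracy_conditions}): nondegenerate $C$ gives $d\leq\delta+1$, degeneracy gives $\Delta'\leq\delta$, and Lemma~\ref{end_cor} converts $\Delta'\leq\delta$ plus $4$-cycle-freeness into a triangle. In other words, your ``crux'' case is vacuous, so the spanning argument you were trying to make precise has nothing to act on; the missing idea was simply to read the conditions of Lemma~\ref{end_cor} as graph-theoretic statements that themselves manufacture the $3$-cycle.
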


Thus for a CWS code to be degenerate, either its classical code must be restricted to some hyperplane, or its graph must contain a short cycle.
We note that the actual statement we prove is much stronger than this.
We show that a graph contains either a four-cycle or a particular subgraph having high symmetry, which itself contains a three-cycle, hence the theorem. 
On the constructive side, as a direct application of the lemma~\ref{end_cor}, we can construct codes with distance $\Omega(\sqrt{n})$.
  
\paragraph{Why graphs without a four-cycle?}
One way to interpret our results is the necessity of short cycles in CWS graphs if one wants to exploit quantum degeneracy while using all the components of the code.
On the other hand, it also shows that if we focus only on graphs with no short cycles, then the nondegenerate limit is close to the maximum.  
Traditionally, graphs with a girth of more than four have enjoyed their fair share of success in classical coding theory, mainly due to LDPC codes~\cite{Gallager}. 
A standard decoder for LDPC codes known as the belief propagation decoder works only if the Tanner graph of the code does not contain short cycles.
Unfortunately, the natural extension with using two classical LDPC codes to construct a CSS code does not work.
The symplectic inner product condition ensures that such codes will always have a four-cycle. 
There has been some recent progress on this front, trying to implement belief propagation algorithms despite short cycles and degeneracy~\cite{DS_decode,kuo,BP2019}.  
Note that belief propagation algorithms come from a bigger class of message passing algorithms.
These algorithms work by sending `opinion' from a node to its neighbor about what label should the neighbor take. 
When there are short cycles, a wrong opinion is likely to be fed back, and hence the algorithm might not converge to a solution or may give an incorrect output.  
Note that, unlike CSS codes, graphs in CWS codes do not have any condition on short cycles. 
Moreover, our \textbf{Theorem A} shows that even in the nondegenerate region, where the qualitative nature of quantum errors resembles more with classical errors, we can still get codes with distance around the maximum limit of $\delta+1$. 
Thus, it will be interesting to see if the well-developed classical machinery of message passing algorithms can improve quantum error correction in the CWS framework.   
\subsection{Outline of Proof and Organization of the Paper}
In Section~\ref{sec:notation} we introduce basic background on CWS codes and some useful notations.
The key component in proof of \textbf{Theorem A} and \textbf{B} is the  main lemma (see Section~\ref{sec:Main Lemma}). 
It is well known that diagonal distance CWS codes can be understood by the kernel of the matrix $\left(\mathbb{I} \vert A_G\right)$ where $A_G$ is the adjacency matrix of the graph.
Hence linear dependence of columns of $\left(\mathbb{I} \vert A_G\right)$ completely characterize the diagonal distance.
First, we express the condition of a graph not having a four-cycle in terms of columns using \emph{Property A}(Section~\ref{sec:Main Lemma}).
We begin by showing that it suffices to show the main lemma for columns following a certain structure (Section~\ref{Sufficient}).  
The main lemma shows that every element in the kernel of $(\mathbb{I} \vert G)$ must be of at least one of the five forms given by the lemma. 
These conditions are labeled by $O,A.1,A.2,B$ and $C$.
The vector given by the condition $O$ is the trivial vector, whereas vectors given by $A.1$  and $A.2$ have symplectic weight at least $\delta+1$. 
Condition $B$ includes vectors with symplectic weight $\delta$ and occurs only in graphs with a very particular symmetry.
By noting that the adjacency matrix $A_G$ has no support along the diagonal, it is also fairly straightforward to show that vectors in condition $C$ have symplectic weight of at least $\delta+1$.
This gives us our \textbf{Theorem A}.
We then follow it up in Section~\ref{sec:degeneracy} with some implications of this lemma to the degeneracy of quantum CWS codes, including \textbf{Theorem B} and further application for code search in Section~\ref{sec:Conclusion}.
 
\section{Notations and Preliminaries \label{sec:notation}}
In this section, we introduce basic notations and definitions from the theory of classical as well as quantum error correcting codes.
For more on the classical part, see~\cite{Huffaman} and for quantum codes see~\cite{Dan_thesis,Kitaev,Preskill_lecture}.
Let $\mathbb{F}_q$ be the finite field of size $q$.
We denote the set of matrices with $k$ rows and $n$ columns as $\mathcal{M}_{k,n}\left( R \right)$, where entries in the matrix are from the ring $R$.
To denote the set $\lbrace{0,1,\ldots, t-1 \rbrace}$, we use the notation $\left[ t \right]$.
For an $n$-dimensional vector $v$, we assume that it is represented as $v_1,\ldots,v_n$ so as to denote the projection on the $j$-th component by $v_j$.
Similarly, $A_{i,j}$ will denote the entry at row $i$ and column $j$ for matrix $A$.
We use $e_j$ to denote the $j$-th standard basis vector of an $n$-dimensional vector space; that is, $e_j$ is the vector with $1$ at j~th place and all the other entries being $0$.
\subsection{Coding Theory}\label{subsec:coding-theory}

\subsubsection{Classical Codes}
A linear error correcting code $\mathrm{C}$ is a $k$-dimensional linear subspace of an $n$-dimensional space.
Elements of the code $\mathrm{C}$ are called codewords.
Typically, one takes the larger space to be $\mathbb{F}_{q}^n$, and in that case, we say that $\mathbb{F}_{q}$ is the alphabet for underlying code.
Hamming weight and Hamming distance corresponding to a vector space $\mathbb{F}_q^n$ are respectively defined as follows :
\begin{gather*}
    wt_{H}(u)= \vert \lbrace u_i \neq 0\rbrace \vert\\
    d_H(u,v)= wt_H(u-v)
\end{gather*}
  The hamming weight of a vector is the number of coordinates in which the vector has a non-zero component, and the distance between two points is the number of components at which they differ.
Distance of a code is the minimum over the hamming distance between two distinct codewords.
  For a linear code, the distance of code is the same as the minimum weight of a non-zero codeword.
  A $k$-dimensional linear code with minimum distance $d$, sitting inside an $n$-dimensional subspace is denoted as $(n,k,d)$ code.
Since $\mathrm{C}$ is a $k$-dimensional subspace of $\mathbb{F}_q^n$, it can be thought of as the row space of a  matrix $\mathrm{G} \in \mathcal{M}_{k,n} \left(\mathbb{F}_q \right)$.
 Alternatively, $\mathrm{C}$ is the kernel of a \textit{parity check matrix} $\mathrm{H} \in \mathcal{M}_{n-k,n} \left( \mathbb{F}_{q}\right)$.
 Then the distance of a code is the minimum $t$ such that there exists a set of $t$ columns that are linearly dependent and any set of $t-1$ columns is linearly independent.
 A code with distance $d$ can correct upto $\lfloor \frac{d-1}{2}\rfloor$ errors.

A classical code is \emph{degenerate} if at least one of the components is trivially 0, that is, there exists an $i$ such that for all codewords $c\in C$, $c_i=0$.
Such a code $C$ carries no information in the component where it is degenerate.

\subsubsection{Quantum Codes}
A state of a quantum computer is often described by \textit{qubit(s)}.
A single qubit state is a unit vector in $\mathbb{C}^2$.
The space of multiple qubits is defined via the tensor product of corresponding individual qubit spaces, and thus, an n-qubit state is represented by a norm 1 vector in $\left({\mathbb{C}^2}\right)^ {\otimes n} \coloneqq \mathbb{C}^2 \otimes \cdots \otimes \mathbb{C}^2$; where the tensor product is taken over $n$ copies.
Thus, a state of an n-qubit system is a vector of norm 1 in $N= 2^n$ dimensional complex Hilbert space.

Let $v_0,v_1$ be an orthonormal basis for a single qubit.
Similarly, for $n-$~qubits, we use either $\vert v_j \rangle$ or $\vert v_{j_1,j_2,\ldots,j_n}\rangle$ where $j_1,j_2,\ldots,j_n$ is the $n-$bit representation of $j$.
For simplicity, when there is no confusion, we drop $v$ and represent $v_j$ simply as $j$.
For example, $\lbrace\vert 0 \rangle$, $\vert 1 \rangle \rbrace$ is a basis for single qubit whereas $\lbrace \vert 0\rangle,\vert 1\rangle,\vert 2\rangle,\vert 3\rangle\rbrace$ and  $\lbrace \vert 00\rangle,\vert 01\rangle,\vert 10\rangle,\vert 11\rangle \rbrace$ are two representations for the basis of 2-qubit systems.

Now we define two (families of) unitary operators on n-qubits. 
\begin{gather*}
Z(a): \mathbb{C}^N \rightarrow \mathbb{C}^N \hspace{0.8cm} Z(a)(u_j)= (-1)^{a.j}\ u_j  \hspace{0.6cm}  a\in \lbrace 0,1 \rbrace^n \\
    X(b): \mathbb{C}^N \rightarrow \mathbb{C}^N \hspace{0.8cm} X(b)\left(v_j \right)  \mapsto v_{j+b} \hspace{0.8cm} b\in \lbrace 0,1 \rbrace^n      
\end{gather*}
The \textit{n-qubit Pauli group} $\mathcal{P}_n$ is a subgroup of the Unitary group, generated by the above two families along with the scalars $\lbrace \pm 1, \pm i \rbrace$.
Formally,
 
 \[\mathcal{P}_n \coloneqq \lbrace i^t Z(a) X(b) : t\in \mathbb{Z}_4 , a,b \in\lbrace 0,1 \rbrace^n  \rbrace.\]
The \textit{support} of a Pauli operator $P$ is the set of coordinates where at least one of the vectors $a$ or $b$ is non-zero.
Support is exactly the set of qubits where $P$ acts non-trivially.
Any general operator on $\mathbb{C}^N$ can be expressed via the Pauli basis, $\mathcal{B}_p\coloneqq \lbrace P \in \mathcal{P}_n: t=0 \rbrace$.
Under this expansion, the base (or support) for any operator $L$ is the set of qubits in the Pauli expansion where the action is non-trivial.
The cardinality of support of E\@ is known as the \textit{size} (or weight) of E\@.

With this in hand, we are ready to describe the Knill-Laflamme condition for error correction.
A quantum code is a subspace $\mathcal{M} \subset  \mathcal{N}\coloneqq\left(\mathbb{C}^2 \right)^{\otimes n}$.
 Let $\mathcal{M}$ be a quantum code of dimension $K$ with basis $\lbrace v_i : i \in [K]\rbrace$.
 We  say that an operator $L$ satisfies the Knill-Laflamme condition if for all $i,j$ there exists a map $f$ which depends only on $L$ such that

\begin{equation}
 \langle v_i \vert L \vert v_j \rangle = \delta_{i,j} f(L) \label{KL}
\end{equation}
Then the Knill-Laflamme~\cite{Laflamme} theorem says that the code has distance $\delta$ if for all E\@ with size less than $\delta$ satisfy~\eqref{KL} and there exists an E\@ of size $\delta$ that violets~\eqref{KL}.
Similar to classical code, a quantum code can correct errors of size upto $\lfloor \frac{\delta-1}{2}\rfloor$.
And it can correct $\delta-1$ erasure errors.
The Knill-Laflamme condition can be looked at as a combination of two equations.
First, when $i \neq j$, the equation~\eqref{KL} says that $\langle v_i \vert L \vert v_j \rangle$,should be zero.
This is needed so that two different errors do not cause confusion. 
If $\langle v_i \vert L \vert v_j \rangle$ is non-zero then the original message $v_j$, after incurring error $L$ can potentially give $v_i$, causing confusion.
The second part of the condition, when $i=j$, indicates that all the diagonal elements must be the same (independent of $i$).
For a code to have distance $\delta$, both the conditions must be satisfied for all the operators of size $\delta-1$.

Categorically, based on these conditions  quantum codes can be divided in two sections.
Those that have $f(L)=0$ (for all $L$ of size upto $\delta-1$), known as \textit{non-degenerate} codes, and those for which there exists at least on $L=L_0$ of size smaller than $\delta$ with $f(L_0) \neq 0$,
known as \textit{degenerate} codes.
Note that classical and quantum notions of degeneracy are qualitatively different from each other. 
Classical degeneracy refers to one of the dimensions being unused, whereas quantum degeneracy tells that code is immune to a particular error.
In that case, although E\@ is a non-trivial error on $\mathbb{C}^N$, it acts trivially on the code $\mathcal{M}$.
This immunity to non-trivial error is not possible in the case of classical codes, as errors act via the addition of vectors.   
Hence, it becomes interesting to ask which quantum codes are degenerate and which are not.   
\subsection{Stabilizer Codes}\label{subsec:stabilizer-codes}
Consider the projection map from the Pauli group $\mathcal{P}_n$ to the binary vector space $\mathbb{F}_2^{2n}$,
\begin{gather*} \label{pro_map}
    \pi: \mathcal{P}_n \rightarrow \mathbb{F}_2^{2n}\\
    i^t~  Z(a)~  X(b)\mapsto \left(a,b\right)
\end{gather*}
The projection map $\pi$ is a homomorphism with kernel $\lbrace i^t \mathbb{I} \rbrace$, which is also the center of $\mathcal{P}_n$.

A subgroup $S \in \mathcal{P}_n$ is called a stabilizer subgroup if $S$ is abelian.
It is easy to see that $S$ is abelian if and only if $\pi(S)$ is self-orthogonal with respect to the symplectic inner product $\langle a,b \rangle_{s}:=a_2.b_1-b_2.a_1$ where $a=a_1 | a_2$ and $b=b_1 | b_2$.
For a vector $(a,b)$ its symplectic weight, $wt_{s}\left( a|b\right) = \vert \lbrace i:\ a_i=0\ or\ b_i=0 \rbrace \vert$.

 A subspace $\mathrm{Q} \subset \mathbb{C^{N}}$ is called a \textit{quantum stabilizer code} if there exists a stabilizer subgroup $\mathrm{S}$ which fixes every vector in $\mathrm{Q}$.
 We can alternatively define quantum code with respect to a stabilizer subgroup in the reverse direction; $\mathrm{Q}_\mathrm{S}\coloneqq \lbrace u \in \mathbb{C}^{N} \vert \ Pu=u \text{ for all } P\in \mathrm{S}\rbrace$.
Both of them are equivalent representations, and hence we can describe a stabilizer code by its corresponding stabilizer group.
 For more on stabilizer groups, we refer~\cite{Dan_thesis}.

Quantum stabilizer codes introduced by Gottesman~\cite{Dan_thesis} form an important subclass of quantum codes.
Their excellent algebraic structure allows one to use group theoretic techniques, and in many cases, they perform very close to non-stabilizer codes which lack this structure.

Decoding or detection of errors for stabilizers is done with syndrome measurement.
Consider a stabilizer with (minimal) generators $S_1,S_2,\ldots,S_k$.
Syndrome for an error $E$ is given by a $k$-dimensional binary vector, which indicates if $E$ commutes with each of the  $S_i$.
Thus an error $E$ is detected if it does not commute with at least one of the stabilizer generators.
Moreover, if $E$ itself is in the stabilizer group, then by definition, it acts trivially on the code, and hence code is immune to such errors. 
The only errors that can not be detected are those that commute with $S$ and are outside $S$.
Hence distance of $Q_S=\min \lbrace wt(E) : E \in \mathcal{N}(S) \setminus S \rbrace$ where $\mathcal{N}(S)$ denotes the normalizer of $S$ in the Pauli group.
\subsection{CWS codes}\label{subsec:cws-codes}

Codeword stabilized codes (\textbf{CWS}) is a more unrestricted framework for quantum error correcting codes that encompass the stabilizer framework.
CWS codes are described by two objects; a graph and a classical error correcting code.
Note that classical error codes in the CWS framework need not be linear.
 In fact, the original structure CWS codes were defined with a maximal stabilizer subgroup $S$ (of order $2^n$) and a set of $2^k$ Pauli elements referred to as \emph{word operators}.
Without loss of generality, the stabilizer group can be assumed to be generated by elements of the following form: $S_i= X(e_i) Z(r_i)$, that is, the stabilizer has only one $X$ operator and $r_i$ denotes where the action of the stabilizer has a $Z$-action. 
$r_i$'s form the rows of the adjacency matrix of the graph and word operators can be assumed to be of the form $Z({c_i})$ giving the classical error correcting component of the CWS framework.
This form with only one $X$ in each generator and word operators given by only $Z$ operators is the \emph{standard form} of CWS codes, for more see~\cite{CWS}.

CWS framework uses a graph to transfer $X$ errors into equivalent $Z$ errors, and hence all the errors can be treated as $Z$ errors. 
Roughly speaking, an $X$-error on a qubit is equivalent to $Z$ on all the neighbors given by the graph.
This transferring of errors to Z component only can be understood via the following map from the error set $\mathcal{E}$ to $n$-bits strings:
\[ Cl_{S}\left(E= Z(v) X(u)\right) =  v\ \oplus\ \bigoplus\limits_{i=1}^{n} u_i r_i \]

The $Cl_S$ map is a direct implementation of the above interpretation of the error transfer.
 All the $Z$ errors and all the rows corresponding to $X$ errors are added as binary vectors. 
\begin{theorem}\label{CWS condition} ~\cite[Theorem 3]{CWS} A CWS code in standard form with stabilizer $S$ and codeword operators $\lbrace Z(c) _{c \in C}\rbrace$ (where C is a classical error correcting code) detects errors from $\mathcal{E}$ if and only if $C$ detects errors from $Cl_S \left( \mathcal{E}\right)$  and in addition we have for each E\@,
\begin{gather*} Cl_S(E) \neq 0 \ \text{or} \\
\forall i,  Z(c_i) E= E Z(c_i)
\end{gather*}
\end{theorem}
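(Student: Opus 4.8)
The plan is to verify the Knill--Laflamme detection condition~\eqref{KL} directly on the natural basis $\lbrace \vert v_c\rangle := Z(c)\vert S\rangle : c\in C\rbrace$ of the code, where $\vert S\rangle$ is the unique state fixed by the maximal stabilizer $S$, and then to read off exactly which requirements on $C$ and on the commutation structure survive. Since detection of the whole set $\mathcal{E}$ amounts to~\eqref{KL} holding for each $E\in\mathcal{E}$ separately, I would fix a single error $E=i^tZ(v)X(u)$ and compute every matrix element $\langle v_{c'}\vert E\vert v_c\rangle$.

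The first key step is the reduction identity $E\vert S\rangle = \pm\, Z(Cl_S(E))\vert S\rangle$. It follows from the defining relations $X(e_i)Z(r_i)\vert S\rangle=\vert S\rangle$, which give $X(e_i)\vert S\rangle=\pm Z(r_i)\vert S\rangle$; iterating converts the $X$-part $u$ of $E$ into $\bigoplus_i u_i r_i$ on the $Z$ side, so that $X(u)\vert S\rangle=\pm Z(\bigoplus_i u_i r_i)\vert S\rangle$ and hence $E\vert S\rangle=\pm Z(v\oplus\bigoplus_i u_i r_i)\vert S\rangle=\pm Z(Cl_S(E))\vert S\rangle$. This is exactly the content of the $Cl_S$ map. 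The second ingredient is that for a pure-$Z$ Pauli $Z(w)$ one has $\langle S\vert Z(w)\vert S\rangle=\delta_{w,0}$: a nonzero overlap forces $Z(w)$ to be a scalar multiple of a stabilizer element, but since every generator $S_i=X(e_i)Z(r_i)$ carries a nontrivial $X$, the only pure-$Z$ element of $S$ is the identity (any nonempty product of generators has $X$-part $\bigoplus_{i\in T}e_i\neq 0$).

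Combining these, I would push $Z(c)$ through $E$ using the commutation rule $EZ(c)=(-1)^{c\cdot u}Z(c)E$ to obtain, for all $c,c'\in C$,
\begin{equation*}
\langle v_{c'}\vert E\vert v_c\rangle \;=\; \pm\,\langle S\vert Z\!\left(c'\oplus c\oplus Cl_S(E)\right)\vert S\rangle \;=\; \pm\,\delta_{\,c'\oplus c,\; Cl_S(E)}.
\end{equation*}
A clean case split then finishes the argument. If $Cl_S(E)\neq 0$, only off-diagonal entries ($c'=c\oplus Cl_S(E)\neq c$) can be nonzero, the diagonal vanishes, and~\eqref{KL} holds with $f(E)=0$ precisely when no two codewords differ by $Cl_S(E)$, i.e.\ exactly when $C$ detects $Cl_S(E)$. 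If $Cl_S(E)=0$, the off-diagonal entries vanish automatically and the diagonal entry equals $(-1)^{c\cdot u}\langle S\vert E\vert S\rangle$; requiring this to be independent of $c$, as~\eqref{KL} demands, is equivalent to $(-1)^{c\cdot u}$ being constant on $C$, which, since $Z(c)$ and $E$ commute iff $c\cdot u=0$, is exactly the condition that $E$ commute with every word operator $Z(c)$. Assembling the two cases over all $E\in\mathcal{E}$ yields the stated equivalence: $C$ detects $Cl_S(\mathcal{E})$, together with, for each $E$, either $Cl_S(E)\neq 0$ or $E$ commuting with all word operators.

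The main obstacle I anticipate is bookkeeping the phases in the reduction identity and in the commutation step so that each diagonal entry is genuinely $\pm 1$ times a single $c$-independent scalar; this is what makes the ``constant diagonal'' requirement collapse cleanly onto the commutation condition. A secondary subtlety is that $C$ may be nonlinear, so ``$C$ detects $Cl_S(E)$'' must be read as ``no two codewords differ by $Cl_S(E)$'', and the identification of ``constant $(-1)^{c\cdot u}$'' with genuine commutation uses that the trivial word operator $Z(0)$, giving $\vert S\rangle$ itself, lies among the word operators, pinning the constant to $+1$.
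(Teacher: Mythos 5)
Your proof is correct and follows essentially the same route as the paper's own treatment of this (cited) theorem: reduce each error on $\vert S\rangle$ to the pure-$Z$ operator $Z\left(Cl_S(E)\right)$, use that the identity is the only pure-$Z$ element of the maximal stabilizer to evaluate the Knill--Laflamme matrix elements in the basis $Z(c)\vert S\rangle$, and split into the off-diagonal case (equivalent to classical detection of $Cl_S(E)$) and the diagonal case (equivalent to $Cl_S(E)\neq 0$ or commutation with all word operators). Your write-up is in fact more careful than the paper's informal sketch, notably in the phase bookkeeping and in observing that the standard form's trivial word operator $Z(0)=\mathbb{I}$ is what pins the constant diagonal to $+1$ rather than allowing a uniformly anticommuting error.
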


The above theorem can also be understood as an analogue of Knill-Laflamme conditions for CWS codes.
Since for CWS codes, codewords are given by $c_i$ and stabilizer is a subgroup $S$ of order $2^n$, we can translate conditions as follows:
When $i \neq j$,   $\langle c_i E c_j = 0 \rangle$.
Hence, $Z(c_i) Z\left( Cl_S(E)\right) Z(c_j) \notin \pm S$.
The only element of $S$ that does not have an $X$ component is the identity $\mathbb{I}$, which translates the above condition to $Z(c_i) Z\left( Cl_S(E)\right) Z(c_j) \neq \mathbb{I}$   and can be rewritten as $c_i \oplus Cl_S(E) \neq c_j$.
This is exactly the same as the condition classical code $C$ detects $Cl_S(E)$.
On the other hand, for $i=j$,  an error can be detected if it does not commute with at least one of the stabilizers, $Cl_S(E) \neq 0$ or code is immune to it $Z(c_i) E=E Z(c_i)$.

A quantum code is degenerate or not depends on the last two conditions. 
Let $\Delta^\prime$ be the diagonal distance of a CWS code (or equivalently, a graph) defined as follows:
$\Delta^\prime = \min_{E \neq \mathbb{I}} \lbrace wt(E) : Cl_S(E) = 0 \rbrace$.
The diagonal distance plays a key part in the degenerate character of a quantum code.
If the distance of quantum code $d(\mathcal{M}) \leq \Delta^\prime$, then quantum code is non-degenerate otherwise it is degenerate.
Nondegenerate codes have a behavior very similar to classical linear codes, and hence they can benefit from rich machinery that is available for classical codes, whereas nondegeneracy is a feature unique to quantum codes which may give them benefit in terms of handling larger error sets.

Recall that map $\pi$ takes an error as input and outputs its corresponding $X$ and $Z$ components. 
For a CWS code defined by graph $G$ and classical code $C$, let $A_G$ denote the adjacency matrix of graph $G$.
It is easy to see that $Cl_S(E)=0$ if and only if $\pi(E) \in kernel \left(\mathbb{I} \vert A_G\right)$.
This follows directly from the definition of $Cl_S$ and the fact that $A_G$ is a symmetric matrix.
Hence, $\Delta^\prime(G)= \min_{x \neq 0} \lbrace wt_s(x)\ :\ x\in kernel(\mathbb{I} \vert A_G) \rbrace$.
In the following few sections, our focus will be on understanding the property $\Delta^\prime$ of graphs of a certain kind.

\section{{Main Lemma} \label{sec:Main Lemma}}
 Since the kernel of the matrix $\left( \mathbb{I}\vert A_G\right)$ characterizes its diagonal distance (and hence degeneracy), columns of this matrix play a crucial part.
Now we set up a few terminologies regarding them.

Let $A_j$ denote the $j$~th column of $A_G$.
We say that $i$ is in \textit{support} of $A_j$, denoted as \textit{supp$(A_j)$}, if $A_{ij}\neq 0$.
\begin{propertyA} Let $\mathcal{A}$ be a collection of distinct non-zero vectors (of length $n$).
We say that $\mathcal{A}$ has Property A if for any two distinct vectors $A_i,A_j$, $\vert supp(A_i) \cap supp(A_j) \vert \leq 1$. \end{propertyA}

This property is sometimes referred to as \textit{not more than one matching 1s} in the case of binary vectors.
It is easy to see that $G$ has no 4-cycle if and only if columns of $A_G$ have \textbf{Property A}. 
\begin{definition}[degree gap]
We say that a set $\mathcal{S}$ of columns has the degree gap $\delta (\geq 2)$ if for all $s \in \mathcal{S}$, $wt_{H}(s)=1$ or $wt_{H}(s) \geq \delta$ and there exists a column $s_0$ with $wt_{H}({s_{0}}) = \delta$.
\end{definition} 
The degree gap is a translation of the minimum vertex degree of a graph $G$ to the matrix $\left( \mathbb{I} \vert A_G\right)$.
If $G$ has a minimum degree $\delta \geq 2$ then it is the same as the degree gap of  $\left( \mathbb{I} \vert A_G\right)$.
Graph with isolated vertices or degree one vertices is not an interesting case as far as diagonal distance is concerned.
For such graphs, $\Delta^\prime$, which is the same as the minimum weight vector in the kernel, is very straightforward to deduce.
Note that for a graph with minimum degree $\delta$, its diagonal distance is upper bounded by $\delta+1$ by picking up minimum degree vertex and relevant columns from identity.

We say that $\Gamma= \lbrace \gamma_1, \ldots, \gamma_l\rbrace$, a set of vectors, \emph{sums to $0$}, if $\sum\limits_{i=1}^{i=l} \gamma_i$ is an all-zero vector.
To mean $\Gamma$ sums to $0$ and we write $\sum\Gamma=0$.
For any vector $S$, let $E(S)$ denote $\bigcup\limits_{j \in Supp(S)} e_j$.
That is, $E(S)$ consists of standard basis vectors corresponding to the support of $S$.

For proving \textbf{Theorem A}, we first prove the following lemma, from which the theorem follows directly : 

\begin{lemma}[Main Lemma] \label{main_lemma} Let $\mathcal{S}$ be a collection satisfying  \textbf{Property A} and degree gap $\delta$.
Then for any $\Gamma \subset \mathcal{S}$ such that $\sum \Gamma=0$ , at least of the following is true
\begin{itemize}
\item[O)] $\Gamma= \emptyset$. 
\item[A.1)] $ \Gamma$ contains at least $\delta+1$ columns of weight one.
\item[A.2)] $\Gamma$ contains at least $\delta+1$ columns of weight  greater than $1$. 
\item[B)] $\Gamma$ is a set of size $2\delta$ having $\delta$ columns  of weight $1$ and $\delta$ columns of weight greater than $1$. 

\item[C)] $\Gamma= S_\delta \cup E(S_\delta)$ for some vector $S_\delta$ of weight $\delta$.

\end{itemize}
\end{lemma}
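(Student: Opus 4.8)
The plan is to split $\Gamma$ by weight and then pit the degree-gap lower bound on weights against the overlap constraint of \textbf{Property A} in a single double-counting argument. I would write $\Gamma = \Gamma_1 \cup \Gamma_{>1}$, where $\Gamma_1$ is the set of weight-one columns and $\Gamma_{>1}$ the set of columns of weight $>1$; by the degree gap each column of $\Gamma_{>1}$ has $wt_H \geq \delta$. Set $a = |\Gamma_1|$ and $b = |\Gamma_{>1}|$. If $a \geq \delta+1$ we are in case A.1 and if $b \geq \delta+1$ we are in case A.2, so I would assume $a \leq \delta$ and $b \leq \delta$ and show that $\Gamma$ must then fall under O, B, or C. Since the members of $\Gamma_1$ are distinct standard basis vectors, $\sum \Gamma_1$ is the indicator $\mathbf{1}_J$ of a set $J$ with $|J| = a$; because $\sum \Gamma = 0$ over $\mathbb{F}_2$, this forces $\sum \Gamma_{>1} = \mathbf{1}_J$, i.e. the binary sum of the heavy columns has weight exactly $a$.

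The heart of the argument is a lower bound on that weight. For a coordinate $i$ let $d_i$ count the columns of $\Gamma_{>1}$ whose support contains $i$, and let $\epsilon_i = 1$ if $d_i$ is odd and $0$ otherwise, so that $a = \sum_i \epsilon_i$ while $\sum_i d_i = \sum_{s \in \Gamma_{>1}} wt_H(s) \geq b\delta$. \textbf{Property A} says any two heavy columns meet in at most one coordinate, so each pair contributes to at most one doubly-covered coordinate, giving $\sum_i \binom{d_i}{2} \leq \binom{b}{2}$. Feeding in the elementary bound $d_i - \epsilon_i \leq d_i(d_i-1) = 2\binom{d_i}{2}$ (immediate for $d_i \leq 1$, and for $d_i \geq 2$ it follows from $d_i - 1 \geq 1$) and summing yields $\sum_i d_i - a \leq b(b-1)$, hence
\[ a \;\geq\; \sum_i d_i - b(b-1) \;\geq\; b\delta - b(b-1) \;=\; b(\delta - b + 1). \]

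Writing $f(b) = b(\delta-b+1)$, the factorization $f(b) - \delta = -(b-1)(b-\delta)$ shows $f(b) \geq \delta$ for $1 \leq b \leq \delta$, with equality only at the endpoints $b=1$ and $b=\delta$. Against the standing bound $a \leq \delta$ this pins $b$ down. If $b=0$ then $\Gamma = \Gamma_1$ is a set of distinct basis vectors summing to $0$, forcing $\Gamma = \emptyset$ (case O). If $1 < b < \delta$ then $a \geq f(b) > \delta$, contradicting $a \leq \delta$, so this range cannot occur. If $b=1$ then the unique heavy column equals $\mathbf{1}_J$, so its weight is $a \leq \delta$; combined with $wt_H \geq \delta$ this gives weight exactly $\delta$ and $\Gamma_1 = E(S_\delta)$ with $S_\delta$ that column, which is case C. If $b=\delta$ then $a \geq f(\delta) = \delta$ forces $a=\delta$, so $|\Gamma| = 2\delta$ with $\delta$ columns of each type, which is case B.

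I expect the one delicate step to be the core estimate, where both hypotheses must be used at once: the degree gap to get $\sum_i d_i \geq b\delta$, and the no-four-cycle condition, repackaged as $\sum_i \binom{d_i}{2} \leq \binom{b}{2}$, to control the loss from overlaps. Once that inequality is in hand the remaining work is only the endpoint analysis of $f$ and the small-$b$ bookkeeping, where the main care is in verifying that the heavy part must reproduce exactly $\mathbf{1}_J$.
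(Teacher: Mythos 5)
Your proof is correct, and it takes a genuinely different route from the paper. The paper first normalizes $\Gamma$ via a bipartite \emph{constraint graph} and a list of without-loss-of-generality conditions, disposes of small sizes separately (Lemma~\ref{no_low _weight} for $\vert\Gamma\vert\le\delta+1$, Lemma~\ref{Base_case} for $\vert\Gamma\vert=\delta+2$), and then runs an induction on $X_\Gamma=\vert\Gamma\vert-\delta_\Gamma-1$ whose key step merges a heavy column with an intersecting basis column ($\alpha_1\mapsto\alpha_1+e_{i_1}$) to shrink the instance; this requires tracking how the degree gap and Property~A behave under the merge. You instead prove the whole statement with one global double count: the degree gap gives $\sum_i d_i\ge b\delta$, Property~A repackaged as $\sum_i\binom{d_i}{2}\le\binom{b}{2}$ controls overlaps, and the parity bookkeeping $d_i-\epsilon_i\le 2\binom{d_i}{2}$ ties these to $a=\mathrm{wt}_H\bigl(\sum\Gamma_{>1}\bigr)$, yielding $a\ge b(\delta-b+1)$; the factorization $b(\delta-b+1)-\delta=(b-1)(\delta-b)$ then pins $b$ to $\{0,1,\delta\}$ once $a,b\le\delta$, and these three values are exactly cases O), C) and B). What your approach buys is brevity and transparency: no induction, no normalization of $\Gamma$, and the exceptional cases B) and C) appear naturally as the equality cases of a single inequality. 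Moreover, pushing the tightness analysis at $b=\delta$ (equality forces every heavy column to have weight exactly $\delta$, every $d_i\le 2$, and every pair of heavy columns to intersect in exactly one coordinate) would recover for free the structural facts the paper extracts separately when proving Lemma~\ref{end_cor}, so your argument could also replace that part of the paper's analysis. What the paper's longer route buys is the explicit intermediate statements (Corollary~\ref{size_delta1}, Corollary~\ref{coro_delta+1}) and the constraint-graph language that the later sections reuse when discussing degeneracy.
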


Taking $\mathcal{S}$ to be columns of $\left[ \mathbb{I} \vert A_G\right]$ in Lemma~\ref{main_lemma}, any non-trivial element in the kernel will have symplectic weight at least $\delta$.

\subsection{Sufficient class of $\Gamma$ for Lemma~\ref{main_lemma}} \label{Sufficient}
In this section, we give a set of conditions for $\Gamma$. 
We show that it suffices to check Lemma~\ref{main_lemma} only for this class. 
Before that, we define the \textit{constraint Graph} corresponding to $\Gamma$.  
   
\begin{definition}[Constraint Graph] Given a set $\Gamma$ consider a bipartite graph $\mathcal{G}_\Gamma$  on $\vert\Gamma \vert +n$ vertices, $\mathcal{G}=\left( V_\Gamma \sqcup \mathrm{X}, \mathsf{E} \right)$ where $V_\Gamma$ is one partition of size $\vert \Gamma \vert$ having a vertex corresponding to each to a vector $v \in \Gamma$ and the other partition $\mathrm{X}= \lbrace 1,2,\ldots,n\rbrace$.
Edges define the support of vectors, that is, $\left( v_i,j\right) \in \mathsf{E}$ if $j$ is in the support of $v_i$.\end{definition}
 Let $\Gamma_1$ denote the vertices of $\Gamma$ that have degree one.
 These are precisely the vertices corresponding to standard basis vectors $e_j$ that are in $\Gamma$.
 Moreover, by construction, every vertex in $\Gamma_\delta:= V_\Gamma \setminus \Gamma_1$ has degree at least $\delta$.
 With the slight abuse of notation, we will denote $V_\Gamma$ by $\Gamma$ itself due to the natural one-to-one correspondence.
 For any vertex $u$ we will denote neighbors of $u$ by $\mathcal{N}(u)$ and $N(U)$ will denote $\bigcup_{u\in U} \mathcal{N}(u)$. \textbf{Property A} will translate to here as $\vert \mathcal{N}({v_i}) \cap \mathcal{N}({v_j}) \vert \leq 1$.
 We say that $v_i, v_j \in \Gamma$ intersect if this intersection is of size 1.
 If $\sum\Gamma=0$, then for all $x \in \mathrm{X}$, $\deg(x)$ is even.
 This added with \textbf{Property A}, will say that vertex $v$ of degree $d_v$ should intersect at least $d_v$ distinct vertices.
 
Let $\Gamma= \lbrace \alpha_1,\alpha_2, \ldots, \alpha_{\vert \Gamma_\delta \vert}, e_{i_{1}}, e_{{i}_2}, \cdots, e_{i_{\vert\Gamma_1\vert}} \rbrace$.
Without loss of generality, we can assume that $\deg(\alpha_i)\geq \deg(\alpha_j)$ for $i<j$.
 From here on, we will assume that $\Gamma$ satisfies the Property A and sums to $0$ unless stated otherwise. 

\begin{obs}\label{obs3.1} If there is a vertex $v= \alpha_i$ such that no  $e_j \in \Gamma_1$  intersects with $\alpha_i$ then $A.2)$ is satisfied.
This follows from the fact that $d_v \geq \delta$ and at least $d_v$ vertices should intersect with $v$.
So without loss of generality we assume that $e_{i_1}$ intersects $\alpha_1$.
\end{obs}

\begin{obs}\label{obs3.2} For $\Gamma$ of size greater than $2\delta$ either $A.1)$ or $A.2)$ are trivially satisfied and for $\Gamma$ of size $2\delta$ at least one of the $A.1)$, $A.2)$ or $B)$ are satisfied.  \end{obs}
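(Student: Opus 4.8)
The plan is to reduce the statement to a pure counting argument on the two kinds of columns that can occur in $\Gamma$. By the degree gap hypothesis, every column of $\mathcal{S}$ has Hamming weight either exactly $1$ or at least $\delta$; since $\delta \geq 2$, ``weight greater than $1$'' coincides with ``weight at least $\delta$''. Hence the partition $\Gamma = \Gamma_1 \sqcup \Gamma_\delta$ introduced above splits $\Gamma$ exactly into its weight-one columns and its columns of weight greater than $1$. Write $a = |\Gamma_1|$ and $b = |\Gamma_\delta|$, so that $|\Gamma| = a + b$. The key observation is that condition $A.1$ is precisely $a \geq \delta+1$, condition $A.2$ is precisely $b \geq \delta+1$, and condition $B$ is precisely $a = b = \delta$ (which automatically forces $|\Gamma| = 2\delta$).

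First I would handle the case $|\Gamma| > 2\delta$, i.e.\ $a + b \geq 2\delta + 1$. Suppose neither $A.1$ nor $A.2$ holds, so that $a \leq \delta$ and $b \leq \delta$. Then $a + b \leq 2\delta$, contradicting $a + b \geq 2\delta + 1$. Hence at least one of $A.1$, $A.2$ must hold in this regime.

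Next I would treat the boundary case $|\Gamma| = 2\delta$, i.e.\ $a + b = 2\delta$. If $a \geq \delta+1$ then $A.1$ holds, and if $b \geq \delta+1$ then $A.2$ holds. In the remaining situation both $a \leq \delta$ and $b \leq \delta$; combined with $a + b = 2\delta$ this forces $a = \delta$ and $b = \delta$, which is exactly condition $B$. This exhausts all possibilities and completes the argument.

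The point worth flagging is that there is no genuine obstacle here beyond correctly matching the combinatorial labels to the two counts: the entire content is the pigeonhole fact that two quantities each capped at $\delta$ cannot sum to more than $2\delta$, and can sum to exactly $2\delta$ only when both equal $\delta$. The structural work — exploiting Property A and the adjacency-matrix geometry — is deferred to the finer cases of the Main Lemma (condition $C$ and the detailed analysis of size-$2\delta$ sets); this observation merely isolates, by a clean count, the large-size regime in which those finer arguments become unnecessary.
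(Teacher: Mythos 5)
Your proof is correct and is essentially the argument the paper itself relies on: the paper states this as an observation without proof precisely because, under the degree gap hypothesis, $\Gamma$ partitions exactly into $\Gamma_1$ (weight-one columns) and $\Gamma_\delta$ (columns of weight $\geq \delta$), and the claim is then the pigeonhole count you give. Your explicit matching of $A.1$, $A.2$, $B$ to the conditions $|\Gamma_1| \geq \delta+1$, $|\Gamma_\delta| \geq \delta+1$, and $|\Gamma_1| = |\Gamma_\delta| = \delta$ is exactly what the paper leaves implicit.
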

\begin{obs}\label{obs3.3} Let $\Gamma$ be a collection such that $\vert \Gamma_\delta \vert= \delta$ then $\vert \Gamma_1 \vert \geq \delta$ and hence $\vert \Gamma \vert \geq 2 \delta$. 
\end{obs}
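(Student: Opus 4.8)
The plan is to establish both $|\Gamma_1|\ge\delta$ and the consequence $|\Gamma|\ge 2\delta$ by a single \emph{global} double count over the coordinate side $\mathrm{X}$ of the constraint graph, rather than analysing one heavy vertex at a time. For each coordinate $x\in\mathrm{X}$ let $h_x$ denote the number of heavy vertices (elements of $\Gamma_\delta$) whose support contains $x$; equivalently, $h_x$ is the number of neighbours of $x$ lying in $\Gamma_\delta$. Two bookkeeping facts drive everything. First, since there are exactly $\delta$ heavy vertices and each has degree at least $\delta$,
\[
\sum_{x\in\mathrm{X}} h_x \;=\; \sum_{\alpha\in\Gamma_\delta} \deg(\alpha) \;\ge\; \delta\cdot\delta \;=\; \delta^2 .
\]
Second, $\sum_{x}\binom{h_x}{2}$ counts, over all $x$, the pairs of heavy vertices sharing the coordinate $x$, i.e. $\sum_x\binom{h_x}{2}=\sum_{\{\alpha,\alpha'\}}\lvert\mathrm{supp}(\alpha)\cap\mathrm{supp}(\alpha')\rvert$; by \textbf{Property A} every pair of heavy vertices shares at most one coordinate, so this sum is bounded by the number of pairs,
\[
\sum_{x\in\mathrm{X}} \binom{h_x}{2} \;\le\; \binom{\delta}{2}.
\]

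The bridge to $|\Gamma_1|$ is a parity observation. Because $\sum\Gamma=0$, every $x\in\mathrm{X}$ has even degree in $\mathcal{G}_\Gamma$, and that degree equals $h_x$ plus $1$ exactly when $e_x\in\Gamma$; hence $e_x\in\Gamma$ if and only if $h_x$ is odd, so $|\Gamma_1|$ equals the number of coordinates with $h_x$ odd. I would then invoke the elementary inequality $h_x \le 2\binom{h_x}{2} + [\,h_x\text{ odd}\,]$, valid for every non-negative integer (equality at $h_x\in\{0,1,2\}$, and for $h_x\ge 2$ it reduces to $2\le h_x$). Summing over $x$ and feeding in the two displayed bounds gives
\[
\delta^2 \;\le\; \sum_x h_x \;\le\; 2\sum_x\binom{h_x}{2} + |\Gamma_1| \;\le\; 2\binom{\delta}{2} + |\Gamma_1| \;=\; \delta^2-\delta+|\Gamma_1|,
\]
whence $|\Gamma_1|\ge\delta$, and therefore $|\Gamma|=|\Gamma_\delta|+|\Gamma_1|\ge 2\delta$.

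The step I expect to be the crux is resisting the temptation of a purely local count. Arguing vertex-by-vertex only shows that each heavy $\alpha_i$ must meet at least one weight-one vector (it must intersect $\delta$ distinct vertices but can intersect at most $\delta-1$ other heavy vectors), which is too weak: the same $e_j$ may lie in the support of several heavy vectors, so those weight-one partners could coincide and one cannot conclude $|\Gamma_1|\ge\delta$. The global double count circumvents this precisely because the multiplicities $h_x$ record such overlaps explicitly—any coordinate shared by many heavy vectors inflates $\binom{h_x}{2}$ and is thus penalised by \textbf{Property A}, while the parity identity converts ``how many coordinates carry a weight-one vector'' into ``how many $h_x$ are odd.'' Balancing these two effects through the elementary inequality is what yields the sharp bound $\delta$. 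The only routine point to verify is that all three ingredients—the degree bound, \textbf{Property A}, and the even parity of coordinate degrees—are indeed available under the standing hypotheses that $\Gamma$ satisfies \textbf{Property A}, has degree gap $\delta$, and sums to $0$.
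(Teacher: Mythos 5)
Your proof is correct, but it follows a genuinely different route from the paper's. The paper argues locally and constructively: it builds an explicit injection $\gamma\colon \Gamma_\delta \hookrightarrow \Gamma_1$ by considering, for each heavy vector $\alpha_i$, its set of \emph{private} coordinates $\hat{Z}(\alpha_i)=\mathrm{supp}(\alpha_i)\setminus \bigcup_{j\neq i}\mathrm{supp}(\alpha_j)$. This set is nonempty because $\deg(\alpha_i)\geq\delta$ while, by \textbf{Property A}, the other $\delta-1$ heavy vectors can cover at most $\delta-1$ coordinates of $\mathrm{supp}(\alpha_i)$; a private coordinate has multiplicity one among heavy vectors, so the parity forced by $\sum\Gamma=0$ puts the corresponding standard basis vector in $\Gamma_1$, and since private coordinates of distinct heavy vectors are automatically distinct, choosing one per $\alpha_i$ yields the injection. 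Note that this slightly undercuts your crux discussion: the naive local argument you rightly dismiss fails because weight-one partners can coincide, but the paper's refinement assigns each heavy vector a partner through a coordinate that no other heavy vector touches, so coincidence is impossible and a local argument does suffice. Your global double count, by contrast, is a valid alternative with its own merits: the three ingredients you need (degree gap, \textbf{Property A}, even coordinate degrees) are all available under the standing hypotheses, the parity identity $\vert\Gamma_1\vert=\#\lbrace x : h_x \text{ odd}\rbrace$ is sound because the vectors of $\Gamma$ are distinct, and the inequality $h_x \leq 2\binom{h_x}{2}+[h_x\ \text{odd}]$ holds for every nonnegative integer, so the chain $\delta^2\leq \sum_x h_x \leq 2\binom{\delta}{2}+\vert\Gamma_1\vert$ is airtight. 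What your route buys is robustness and quantitative flexibility: it degrades gracefully if pairwise supports were allowed to meet in up to $t>1$ points, and it automatically strengthens the conclusion when heavy vertices have degree strictly greater than $\delta$ (the excess in $\sum_x h_x$ transfers directly to $\vert\Gamma_1\vert$). What the paper's route buys is sharper structural information, in the spirit used elsewhere in its analysis: not merely that at least $\delta$ coordinates have odd multiplicity, but that each heavy vector owns a coordinate of multiplicity exactly one carrying its own dedicated weight-one partner.
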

\begin{proof}
Let $\Gamma= \lbrace \alpha_1, \alpha_2, \ldots, \alpha_\delta, e_{i_{1}},\ldots, e_{i_{\vert \Gamma_1\vert}} \rbrace$.
Then we give an injective map $\gamma: \Gamma_{\delta} \hookrightarrow \Gamma_1$.
For any $\alpha_i$, let $\hat{Z}(\alpha_i):=supp(\alpha_i)\setminus \cup_{j\neq i} supp(\alpha_j)$, that is $Z(\alpha_i)$ denotes elements that are in support of only $\alpha_i$ and none of the other $\alpha_j$s. 
We can rewrite $\hat{Z}(i)$ as $supp(\alpha_i)- \cup_{j\geq 2} \left( supp(\alpha_i) \cap supp(\alpha_j)\right)$.
Note that as $\Gamma$ satisfies \textbf{Property A}, $\hat{Z}(\alpha_{i}) \neq \emptyset$.
Moreover, since $\Gamma$ sums to $0$, for each $z \in \hat{Z}(\alpha_i)$, $e_z\in \Gamma_1$.
By construction, each $z_j$ is in a unique $\hat{Z}(\alpha_i)$.
Now for every $i$, choose a  $z_i \in \hat{Z}(\alpha_i)$ now $\gamma: \alpha_i \mapsto z_i \in \hat{Z}(\alpha_i)$ is the required inclusion map.
\end{proof}
To summarize, from here on, without loss of generality, we can assume that

 $\mathcal{G}_\Gamma = \left( \Gamma \sqcup X, \mathsf{E} \right)$ where $X= \lbrace 1,2, \ldots, n \rbrace$ and
 $\Gamma= \lbrace \alpha_1,\alpha_2, \ldots, \alpha_{\vert \Gamma_\delta \vert}, e_{i_{1}}, e_{{i}_2}, \cdots, e_{i_{\vert\Gamma_1\vert}} \rbrace$ 
\begin{description}
\label{WLoG}
\item[1.] $\deg(\alpha_i) \geq \delta$, $\mathcal{N}({e_i})=i$ and for all $x\in X$, $\deg(x)$ is even.
\item[2.] For $i\neq j$, $\vert \mathcal{N}(\alpha_i) \cap \mathcal{N}(\alpha_j) \vert \leq 1$ (\textbf{Property A})
\item[3.] $\deg(a_i) \geq \deg(a_j)$ if $i<j$ \ldots (WLOG)
\item[4.] $i_1 \in \mathcal{N}(\alpha_1)$, each $v \in \Gamma$ intersects with at least $\deg(v)$ vertices of $\Gamma$. (Observation~\ref{obs3.1})
\item[5.] For all $\alpha_j$, there exists $e_{k} \in \Gamma$ such that $\alpha_j$ and $e_{k}$ intersect \ldots (Observation~\ref{obs3.1})
\item[6.] $\vert \Gamma \vert \leq 2\delta-1$(Observation~\ref{obs3.2}).
\end{description}
\subsection{Proof of Lemma~\ref{main_lemma}}
After reducing the class of $\Gamma$  to the above conditions, in this section, we show that Lemma~\ref{main_lemma} holds under these conditions.
We begin by separating out the $\Gamma$ of size $\delta+1$.

\begin{lemma} \label{no_low _weight} Let $\mathcal{S}$ be a collection satisfying \textbf{Property A} with degree gap $\delta$ and $\Gamma$ be a non-trivial subset of $\mathcal{S}$ of size at most $2 \delta-1$ with $\sum \Gamma=0$ then either
\begin{itemize}
\item[A)] $ \delta+2 \leq \vert \Gamma \vert \leq 2\delta-1$.
\item[C)] $\Gamma= S_\delta \cup E(S_\delta)$ for some vector $S_\delta$ of weight $\delta$.
\end{itemize}
\end{lemma}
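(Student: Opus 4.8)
The plan is to run a case analysis on $a := |\Gamma_\delta|$, the number of weight-$\geq\delta$ columns in $\Gamma$, writing $b := |\Gamma_1|$ for the number of standard basis vectors, so $|\Gamma| = a + b$. Throughout I work under the standing reductions of Section~\ref{Sufficient}; in particular, Observation~\ref{obs3.1} lets me assume every $\alpha_j$ intersects some $e_k \in \Gamma$, so that $b \geq 1$ whenever $a \geq 1$. First I would dispose of the two extremes. If $a = 0$, then $\Gamma$ is a set of \emph{distinct} standard basis vectors summing to zero over $\mathbb{F}_2$, which forces $\Gamma = \emptyset$ and contradicts nontriviality. If $a \geq \delta + 2$, then trivially $|\Gamma| \geq a \geq \delta + 2$, landing in case A). The real work is thus confined to $1 \leq a \leq \delta + 1$.

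The case $a = 1$ is what produces case C). Here $\sum \Gamma = 0$ reads $\alpha_1 = e_{i_1} + \cdots + e_{i_b}$, and since the $e_{i_j}$ are distinct the right-hand side is the indicator of $\{i_1,\ldots,i_b\}$; hence $supp(\alpha_1) = \{i_1,\ldots,i_b\}$ and $b = wt_H(\alpha_1) \geq \delta$. If $wt_H(\alpha_1) = \delta$, then $\Gamma = \{\alpha_1\} \cup E(\alpha_1)$, which is exactly case C) with $S_\delta = \alpha_1$; if $wt_H(\alpha_1) > \delta$, then $|\Gamma| = 1 + b \geq \delta + 2$, case A).

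For the middle range $2 \leq a \leq \delta$ I would reuse the private-support idea of Observation~\ref{obs3.3}. For each $\alpha_i$ set $\hat{Z}(\alpha_i) = supp(\alpha_i) \setminus \bigcup_{j \neq i} supp(\alpha_j)$; \textbf{Property A} caps each pairwise overlap by one, so $|\hat{Z}(\alpha_i)| \geq wt_H(\alpha_i) - (a-1) \geq \delta - a + 1$, and these sets are pairwise disjoint. Any coordinate $z$ lying in some $\hat{Z}(\alpha_i)$ is covered exactly once by the high-weight columns, so evenness of $\sum \Gamma$ forces $e_z \in \Gamma$; hence $b \geq \sum_i |\hat{Z}(\alpha_i)| \geq a(\delta - a + 1)$ and $|\Gamma| \geq a(\delta - a + 2)$. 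The map $a \mapsto a(\delta - a + 2)$ is concave and equals $2\delta$ at both $a = 2$ and $a = \delta$, so $|\Gamma| \geq 2\delta$ on this whole range, contradicting $|\Gamma| \leq 2\delta - 1$. Thus $2 \leq a \leq \delta$ simply cannot occur.

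The one delicate point, which I expect to be the main obstacle, is the boundary $a = \delta + 1$, where the degree count degenerates: $a(\delta - a + 1) = 0$, so it yields nothing. This is precisely the ``complete-graph design'' situation ($\delta+1$ weight-$\delta$ columns meeting pairwise in one coordinate, every coordinate covered twice), which genuinely sums to zero and is \emph{not} of type C), so some extra input is unavoidable here. The resolution is the standing condition from Observation~\ref{obs3.1}: it guarantees $b \geq 1$, whence $|\Gamma| = (\delta+1) + b \geq \delta + 2$ and we are again in case A). Collecting the cases, every admissible $\Gamma$ satisfies A) or C), which is the claim.
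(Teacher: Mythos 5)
Your case analysis on $a=|\Gamma_\delta|$ is correct where it is self-contained, and in the range $2\leq a\leq \delta$ it is genuinely different from, and more robust than, the paper's own argument: the paper treats $|\Gamma_\delta|\geq 2$ by a symmetric-difference count (the sets $W$ and $W'$), whereas you use the pairwise-disjoint private supports $\hat{Z}(\alpha_i)$ and the concave bound $|\Gamma|\geq a(\delta-a+2)\geq 2\delta$, which is airtight. You have also put your finger on exactly the right obstruction at $a=\delta+1$: the vertex-star design of $K_{\delta+1}$, i.e.\ $\delta+1$ columns of weight $\delta$ meeting pairwise in one coordinate with every coordinate covered twice. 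This configuration satisfies \textbf{Property A}, has degree gap $\delta$, sums to zero, has size $\delta+1\leq 2\delta-1$, and is neither of type A) nor of type C). In the paper's proof this configuration slips through an arithmetic error: from $|W|\geq 2\delta-2$ and $|W\cap W'|\leq 2\left(|\Gamma|-2\right)$ the correct conclusion is $|W\setminus W'|\geq 2\delta-2|\Gamma|+2$, which equals $0$ (not $2$, as the paper claims) when $|\Gamma|=\delta+1$; the design attains this $0$.

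The genuine gap is your resolution of that last case. Observation~\ref{obs3.1} is a ``without loss of generality'' only relative to Lemma~\ref{main_lemma}: its justification is that otherwise conclusion A.2 of that lemma holds, and A.2 is not among the conclusions of the present lemma. So when proving this lemma as stated you cannot import the hypothesis that every $\alpha_i$ meets a weight-one column (equivalently $b\geq 1$); without it, the $K_{\delta+1}$ design is an outright counterexample, i.e.\ the statement as printed is false and what you have proved is a strictly weaker, restricted statement. The paper itself tacitly concedes the extra case: Corollary~\ref{size_delta1} lists, at size $\delta+1$, a second outcome beyond type C) --- ``all columns of weight $\delta+1$,'' a misprint for weight $\delta$, which is precisely the design --- and it is this three-outcome version that the inductions in Lemma~\ref{Base_case} and Lemma~\ref{end} apply to the merged sets $\tilde{\Gamma}$, which have no reason to satisfy condition 5 of Section~\ref{Sufficient}. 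Hence your restricted version could not be chained where the lemma is actually used. The correct repair is not to exclude the design by fiat but to enlarge the conclusion to three alternatives (A), C), or the $K_{\delta+1}$ design); note that your own counting already proves that statement if you keep the $a=\delta+1$ branch as an admissible outcome rather than arguing it away, and the design itself is harmless downstream because it falls under conclusion A.2 of Lemma~\ref{main_lemma}.
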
 
\begin{proof} Suppose $\vert \Gamma_1 \vert, \vert \Gamma_\delta \vert \leq  \vert \Gamma \vert \leq \delta+1$.
Then we will show that $C)$ holds.

Clearly for $x \in X$ to have an even degree, $\Gamma_1 \subsetneq \Gamma$.
So we get $\vert \Gamma_\delta \vert \geq 1$.
Suppose $\vert \Gamma_\delta \vert \geq 2$.
Let $a_1,a_2 \in \Gamma_\delta$ be two vertices of degree at least $\delta$.
 Define two sets $W$ and $W^\prime$ as follows:
\begin{gather*}
    W= \mathcal{N}(a_1) \cup \mathcal{N}(a_2)-\left( \mathcal{N}(a_1) \cap \mathcal{N}(a_2)\right) \\
    W^\prime= \bigcup_{\gamma \in \Gamma \setminus \lbrace a_1,a_2 \rbrace} \mathcal{N}(\gamma).
\end{gather*}
$W$ is the set of vertices that are neighbors of either $a_1$ or $a_2$ but not of both, and $W^\prime$ is the set of neighbors of the rest of the $\Gamma$.
The degree of any vertex in $W \setminus W^\prime$ is one.

Note that \[W \cap  W^\prime \subseteq \left(\mathcal{N}(a_1) \bigcup \mathcal{N}(a_2)\right)\ \bigcap \left( \bigcup\limits_{\gamma \in \Gamma \setminus \lbrace a_1,a_2 \rbrace} \mathcal{N}(\gamma) \right)\]
\[ = \bigcup\limits_{\gamma \in \Gamma \setminus \lbrace a_1,a_2 \rbrace}\left( \mathcal{N}(a_1)\bigcap  \mathcal{N}(\gamma) \right) \bigcup\limits_{\gamma \in \Gamma \setminus \lbrace a_1,a_2 \rbrace} \left( \mathcal{N}(a_2)\bigcap   \mathcal{N}(\gamma)\right)\]
Since each intersection is of size at most 1, we get $\vert W \vert \geq 2\delta-2$ and $\vert W \cap W^\prime  \vert \leq 2(\Gamma-2)$.
Thus, we get
$\vert W \setminus W^\prime \vert = \vert W \vert -\vert  W \cap  W^\prime \vert \geq 2\delta - 2(\Gamma-2) \geq 2 \delta - 2(\delta-1) \geq 2$.

But for $X$ to have an even degree, we need $W=W^\prime$ as vertices in $W\setminus W^\prime$ have degree one.
Contradiction.
Thus, the only remaining case is $\vert \Gamma_\delta\vert= 1$.
Let $\Gamma=v_o \cup \Gamma_1$ where $v_0$ has degree at least $\delta$.
 Again since all $x \in X$ have an even degree and $\vert \Gamma_1 \vert \leq \delta$, this vertex can not have a degree more than $\delta$. 
 This forces $\Gamma$ to be of the form $C)$.
\end{proof}
\begin{corollary} \label{size_delta1} There exists no $\Gamma$ of size less than $\delta+1$ such that it satisfies \textbf{Property A}, has degree gap $\delta$ and $\sum \Gamma=0$.
For size $\delta+1$, there are only two such $\Gamma$  either of the form $\Gamma= S_\delta \cup E(S_\delta)$ or having all the columns of weight $\delta+1$.
\end{corollary}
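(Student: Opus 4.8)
The plan is to read both assertions off the constraint-graph picture of $\Gamma$, using only \textbf{Property A} (any two columns meet in at most one coordinate, i.e. $|\mathcal{N}(v_i)\cap\mathcal{N}(v_j)|\le 1$) together with the parity constraint that $\sum\Gamma=0$ makes every $x\in X$ have even degree. The first assertion is immediate from Lemma~\ref{no_low _weight}: a nontrivial $\Gamma$ of size at most $2\delta-1$ lies in case A) (size $\ge\delta+2$) or case C) (size $\delta+1$), and a set of size at most $\delta$ satisfies neither. Equivalently, and more self-containedly, a nonempty $\Gamma$ cannot consist only of distinct basis columns $e_i$ (their sum is never $0$), so $\Gamma_\delta\neq\emptyset$; and any $v\in\Gamma_\delta$ has at least $\delta$ support coordinates, each of which must be cancelled by a \emph{distinct} further column (Property A lets one column cover at most one coordinate of $v$), forcing $|\Gamma|\ge\deg(v)+1\ge\delta+1$.

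The work is in the size-$(\delta+1)$ classification, so fix $\Gamma$ with $|\Gamma|=\delta+1$. First I would pin down weights and a tightness statement. Since $\Gamma_\delta\neq\emptyset$ and $|\Gamma|\ge\deg(v)+1$ for every $v\in\Gamma_\delta$, each such $v$ has $\deg(v)=\delta$; thus every column of weight greater than one has weight exactly $\delta$. The counting is now tight: $v$ has exactly $\delta$ coordinates and exactly $\delta$ other columns to cancel them, so \emph{each coordinate of $v$ lies in exactly one other column} and $v$ meets every other column of $\Gamma$ in a single coordinate. Extracting this tightness first is what makes the rest routine.

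The classification is a case split on $|\Gamma_\delta|$. If $|\Gamma_\delta|=1$, the lone weight-$\delta$ column $v$ must be cancelled entirely by the remaining $\delta$ weight-one columns, which are therefore exactly $\{e_k:k\in\mathrm{supp}(v)\}$; this is the form $\Gamma=S_\delta\cup E(S_\delta)$. If $|\Gamma_\delta|\ge 2$, I claim $\Gamma_1=\emptyset$, so that all $\delta+1$ columns have weight greater than one — equivalently, since each then has weight at least $\delta$ while $|\Gamma|=\delta+1$, weight exactly $\delta$. This implication is the main obstacle. The argument I intend: pick $a_1,a_2\in\Gamma_\delta$ and suppose some $e_k\in\Gamma_1$. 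Tightness for $a_1$ and for $a_2$ forces each of them to meet every other column, in particular $e_k$, giving $k\in\mathrm{supp}(a_1)\cap\mathrm{supp}(a_2)$. But then coordinate $k$ of $a_1$ lies in two other columns, namely $e_k$ and $a_2$, contradicting that each coordinate of $a_1$ lies in exactly one other column. Hence $\Gamma_1=\emptyset$ and $|\Gamma_\delta|=\delta+1$.

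I expect the delicate point to be precisely this step $|\Gamma_\delta|\ge 2\Rightarrow\Gamma_1=\emptyset$: it is where Property A, the even-degree parity, and the size-$(\delta+1)$ tightness must all be used at once, and the contradiction evaporates if the tightness (each coordinate covered exactly once; each pair of weight-$\delta$ columns meeting in exactly one coordinate) is not isolated beforehand. Once the two bullet forms are obtained, I would close by noting both are genuinely realizable — the second is the incidence structure of $K_{\delta+1}$, with the $\delta+1$ columns as vertices and the shared coordinates as edges — so the classification is exhaustive and not vacuous.
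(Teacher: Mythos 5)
Your proof is correct, but it does not follow the paper's route, and it is worth recording that it is in fact sounder than what the paper does. The paper presents Corollary~\ref{size_delta1} as an immediate consequence of Lemma~\ref{no_low _weight}, whose proof handles $\vert\Gamma_\delta\vert\geq 2$ by a two-column count: for $a_1,a_2\in\Gamma_\delta$ it compares the symmetric difference $W$ of their neighborhoods with the neighborhood $W'$ of the rest of $\Gamma$ and claims $\vert W\setminus W'\vert\geq 2\delta-2(\vert\Gamma\vert-2)\geq 2$, contradicting evenness of degrees. That chain silently drops the $-2$ from $\vert W\vert\geq 2\delta-2$; the correct conclusion is $\vert W\setminus W'\vert\geq 2\delta+2-2\vert\Gamma\vert$, which is only $\geq 0$ when $\vert\Gamma\vert=\delta+1$. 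Consequently the lemma as stated wrongly excludes exactly the configuration you exhibit, the incidence structure of $K_{\delta+1}$ (already for $\delta=2$ the triangle $\{(1,1,0),(1,0,1),(0,1,1)\}$ satisfies \textbf{Property A}, has degree gap $2$, sums to $0$, has size $\delta+1$, and is neither of form $C$ nor of size $\geq\delta+2$), and the corollary's second case is precisely what that slip loses. Your per-column tightness argument --- each support coordinate of a column $v\in\Gamma_\delta$ must be cancelled by at least one of the $\vert\Gamma\vert-1$ remaining columns, while \textbf{Property A} lets each remaining column cancel at most one, so at size $\delta+1$ the covering is a perfect matching --- avoids the flawed count entirely and yields the dichotomy cleanly: $\vert\Gamma_\delta\vert=1$ forces $\Gamma=S_\delta\cup E(S_\delta)$, and $\vert\Gamma_\delta\vert\geq 2$ forces $\Gamma_1=\emptyset$ with every weight exactly $\delta$. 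Note also that you prove the statement the paper actually needs downstream: the corollary's literal wording ``all the columns of weight $\delta+1$'' is impossible (a weight-$(\delta+1)$ column cannot be cancelled by only $\delta$ other columns under \textbf{Property A}) and must be read as weight $\delta$, which is how it is invoked in the proof of Lemma~\ref{Base_case} (``$\delta_{\tilde{\Gamma}}+1$ columns of weight $\delta_{\tilde{\Gamma}}$''); your version is the corrected one, and your closing observation that $K_{\delta+1}$ realizes it shows the second case is not vacuous, a point the paper never addresses.
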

\begin{remark} This immediately gives $\Delta^\prime > \frac{\delta}{2} =\Theta\left( \delta\right)$ by taking $\mathcal{A}$ to be collection of columns of $\left[\mathbb{I} \vert A_G \right]$.   
\end{remark}

With Lemma~\ref{no_low _weight}, we eliminate the cases for $\Gamma$ with a size less than $\delta$.
Now we effectively are left with only $\Gamma$s in case $A)$ and thus, to prove Lemma~\ref{main_lemma}, it suffices to prove that condition $A)$ further breaks up into $A.1)$ and $A.2)$ giving Lemma~\ref{main_lemma}.
We prove this in multiple Lemmas.
We will prove it by induction on the gap between $\Gamma$ and $\delta$.
Recall that by construction, for every $\Gamma$, there exists a corresponding $\delta_\Gamma$ which is the minimum degree of vertex other than that of standard basis elements (degree one elements).
  Formally, let $X_\Gamma:= \vert \Gamma\vert- \delta_{\Gamma}-1$, we will show Lemma~\ref{main_lemma} by induction on $X_\Gamma$.
We start with the base case of $X_\Gamma=1$, that is, $\vert \Gamma \vert= \delta+2$.

\begin{lemma}\label{Base_case} For $\Gamma$ of size $\delta_\Gamma+2$ and following Conditions $1-5$ from Section~\ref{WLoG}  at least one $A.1)$ or $A.2)$ holds. \end{lemma}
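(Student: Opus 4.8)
The plan is to work entirely inside the constraint graph $\mathcal{G}_\Gamma$ and argue by a degree count. Write $m=\vert\Gamma_\delta\vert$ and $t=\vert\Gamma_1\vert$, so that $m+t=\vert\Gamma\vert=\delta_\Gamma+2$; recall also that $m\geq 1$, since a non-trivial $\Gamma$ cannot consist only of distinct standard basis vectors (their sum is never $0$). The conclusions $A.1)$ and $A.2)$ are exactly $t\geq\delta+1$ and $m\geq\delta+1$, and since $\delta_\Gamma\geq\delta$ it suffices to produce $t\geq\delta_\Gamma+1$ or $m\geq\delta_\Gamma+1$. I would therefore try to rule out the intermediate range $2\leq m\leq\delta_\Gamma$ entirely.

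The heart of the argument is to count, for each $\alpha_i$, its \emph{private} neighbours $T_i:=\mathcal{N}(\alpha_i)\setminus\bigcup_{j\neq i}\mathcal{N}(\alpha_j)$. By \textbf{Property A} each of the other $m-1$ vertices $\alpha_j$ shares at most one neighbour with $\alpha_i$, so $\alpha_i$ has at most $m-1$ shared neighbours and hence $\vert T_i\vert\geq\deg(\alpha_i)-(m-1)\geq\delta_\Gamma-m+1$. For any $x\in T_i$ the unique $\alpha$ through $x$ is $\alpha_i$; since $\deg(x)$ is even (as $\sum\Gamma=0$) and the only remaining possible edge at $x$ comes from a single $e_x$, parity forces $e_x\in\Gamma_1$. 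The sets $T_i$ are pairwise disjoint by construction, so the vectors $\{e_x:x\in\bigcup_i T_i\}$ are distinct, which gives
\[ t \;=\; \vert\Gamma_1\vert \;\geq\; \sum_{i=1}^{m}\vert T_i\vert \;\geq\; m(\delta_\Gamma-m+1). \]

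Finally I would feed $t=\delta_\Gamma+2-m$ into this bound to obtain
\[ \delta_\Gamma+2-m \;\geq\; m(\delta_\Gamma-m+1), \qquad\text{i.e.}\qquad m^2-(\delta_\Gamma+2)m+(\delta_\Gamma+2)\;\geq\;0. \]
Viewed as a quadratic in $m$ with positive leading coefficient, this is non-negative precisely outside the interval between its two roots $\frac{1}{2}\left((\delta_\Gamma+2)\pm\sqrt{\delta_\Gamma^2-4}\right)$. For $\delta_\Gamma\geq 3$ these roots lie strictly in $(1,2)$ and in $(\delta_\Gamma,\delta_\Gamma+1)$ respectively (using $d-2<\sqrt{d^2-4}<d$), so the inequality fails for every integer $m$ with $2\leq m\leq\delta_\Gamma$. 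Hence any admissible $\Gamma$ must have $m=1$, where $t=\delta_\Gamma+1\geq\delta+1$ yields $A.1)$, or $m\geq\delta_\Gamma+1\geq\delta+1$, which is $A.2)$.

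The main obstacle is the private-neighbour count of the middle paragraph: one must verify simultaneously that the $T_i$ are large (\textbf{Property A}), that each forces a fresh degree-one vertex (parity, from $\sum\Gamma=0$), and that there is no double counting (disjointness of the $T_i$). The only genuine gap is $\delta_\Gamma=2$, where the two roots collide at $m=2$ and the quadratic becomes vacuous; but $\delta_\Gamma=2$ forces $\delta=2$ and $\vert\Gamma\vert=2\delta$, a size already handled by Observation~\ref{obs3.2}, so it lies outside the scope of this base case.
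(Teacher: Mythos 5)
Your proof is correct (for $\delta_\Gamma\geq 3$) and takes a genuinely different route from the paper. The paper argues by cases on the degree sequence and by a merging trick: if some $\deg(\alpha_1)>\delta_\Gamma$ it replaces $\lbrace \alpha_1,e_{i_1}\rbrace$ by the single column $\alpha_1+e_{i_1}$, dropping the size to $\delta+1$ so that Corollary~\ref{size_delta1} applies; in the all-degrees-equal case it either merges two pairs (its Case 2a) or runs a local intersection count around $\alpha_2$ (its Case 2b). You replace all of this by one global double count: \textbf{Property A} gives each $\alpha_i$ at least $\delta_\Gamma-m+1$ private neighbours, parity of $\deg(x)$ converts each private neighbour into a distinct element of $\Gamma_1$, and the resulting inequality $\delta_\Gamma+2-m\geq m(\delta_\Gamma-m+1)$ has no integer solution with $2\leq m\leq\delta_\Gamma$ once $\delta_\Gamma\geq 3$ (your root estimates check out: the roots lie in $(1,2)$ and $(\delta_\Gamma,\delta_\Gamma+1)$). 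Your $T_i$ are exactly the sets $\hat{Z}(\alpha_i)$ from the paper's Observation~\ref{obs3.3}, so your argument is a quantitative sharpening of that observation; what it buys is uniformity (no case split on degrees), independence from Corollary~\ref{size_delta1}, and the fact that only conditions 1 and 2 are used, not 3--5. What the paper's route buys is that the merging construction $\tilde{\Gamma}$ is exactly the engine reused in the inductive step of Lemma~\ref{end}, so its base-case proof doubles as a template for the induction, whereas your argument stands alone.

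On the edge case: you are right to flag $\delta_\Gamma=2$, and your resolution is legitimate. In fact the lemma as literally stated (conditions 1--5 only) fails there: $\Gamma=\lbrace e_1+e_2,\ e_1+e_3,\ e_2,\ e_3\rbrace$ (taken from the triangle graph) satisfies conditions 1--5, has size $\delta_\Gamma+2=4$, but realizes condition $B$ rather than $A.1$ or $A.2$. The statement is only invoked inside Lemma~\ref{end}, where condition 6 ($\vert\Gamma\vert\leq 2\delta_\Gamma-1$) forces $\delta_\Gamma\geq 3$ whenever $\vert\Gamma\vert=\delta_\Gamma+2$; the paper's own proof meets the same case (its Case 2a derives $\vert\Gamma\vert=4$ and tacitly discards it), so your explicit deferral to Observation~\ref{obs3.2} is, if anything, the more careful treatment.
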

\begin{proof}
We divide our proof into two cases, when all the $\alpha_i$s have the same degree and when they do not. 

\textbf{Case 1:} Let $\deg(\alpha_1) > \delta_\Gamma$

Consider $\tilde{\Gamma}:= \lbrace \alpha_1+e_{i_{1}} \rbrace \cup \Gamma \setminus \lbrace  \alpha_1, e_{i_{1}} \rbrace$.

That is, $\tilde{\Gamma}$ is obtained by replacing  $\alpha_1, e_{i_{1}} \rbrace$ with $\alpha_1+ e_{i_{1}}$ in $\Gamma$.
Clearly, $\Gamma$ sums to $0$ if and only if $\tilde{\Gamma}$ sums to $0$.
Also, since $\alpha_1$ intersects with $ e_{i_{1}}$ (condition 4), Property A also holds $\tilde{\Gamma}$.

Note that since $\deg(\alpha_1)>\delta_\Gamma$, the minimum degree in $\Tilde{\Gamma}$ is still $\delta_\Gamma$ but the size has decreased by $1$.
 Thus $\vert \tilde{\Gamma} \vert = \delta_{{\tilde{\Gamma}}} +1$.
  By Corollary~\ref{size_delta1}, $\tilde{\Gamma}$ has $\delta_{{\tilde{\Gamma}}}$ of weight 1, or it has $\delta_{{\tilde{\Gamma}}}+1$ columns of weight $\delta_{{\tilde{\Gamma}}}$.
   In the first case, $\Gamma$ contains  all the weight $1$ columns of $\tilde{\Gamma}$ and $e_{i_{1}}$ so it will satisfy $A.1)$ and in the second case it satisfies $A.2)$ as $\delta_{{\tilde{\Gamma}}}= \delta_{{{\Gamma}}}$.

\textbf{Case 2:} All $\alpha_i$ vertices have the same degree $\delta_\Gamma$.

Let $\vert \Gamma_{\delta_{\Gamma}} \vert=l$ and $\vert \Gamma_1\vert = \delta_\Gamma+2-l$.
If $l=1$ or $\delta_\Gamma+2-l=1$ we are already done. So suppose both of them are at least 2.

We further divide this in two sub-cases:

\textit{Case 2a)} There exists $j \neq 1 $ such that $\alpha_2$ and $e_{i_{j}}$ intersect.
Without loss of generality $j=2$.

Consider $\tilde{\Gamma}:= \lbrace \alpha_1+ e_{i_{1}},  \alpha_2+ e_{i_{2}}\rbrace \cup \Gamma \setminus \lbrace \alpha_1, e_{i_{1}}, \alpha_2, e_{i_{2}} \rbrace$

  With the same argument as above, $\sum\tilde{\Gamma}=0$ and satisfies $A$. From $\Gamma$ to $\tilde{\Gamma}$, size decreases by $2$ and minimum weight decreases by $1$; which gives $\vert \tilde{\Gamma} \vert=\delta_{\tilde{\Gamma}}+1$.

  By Corollary~\ref{size_delta1}, $\tilde{\Gamma}$ contains exactly one vertex of degree $\delta_{\tilde{\Gamma}}$ or all vertices are of degree $\delta_{\tilde{\Gamma}}$.
  Now $\tilde{\Gamma}$ has  at least two vertices of degree $\delta_{{\Gamma}}-1$, namely, $\alpha_1+ e_{i_{1}}$ and $ \alpha_2+ e_{i_{2}}$ and at least $l-2$ vertices of degree $\delta_{{\Gamma}}$ namely $\alpha_{i\geq3}$ and at least $ \delta_{\Gamma}-l$ vertices of degree $1$ namely $e_{i_{k \geq 3}}$.
   First case forces, $\deg(\alpha_1)=2$ and thus $\vert \Gamma \vert=4$ and .
   The second case also forces $\delta_{\Gamma}=l$ as it can not have a vertex of degree 1 and $l=2$ as it can not have a vertex of degree $\delta_\Gamma$ and hence in this case also $\vert \Gamma\vert=4$.
  
\textit{Case 2b)} For any $j \neq 1$, $\alpha_2$ does not intersect with $e_{i_{j}}$.
By condition 5, $1 \in \mathcal{N}(\alpha_2)$ and hence $\mathcal{N}(\alpha_1) \cap \mathcal{N}(\alpha_2)= \lbrace 1 \rbrace $.
Now consider the sets $R=\lbrace \alpha_{j\geq 3} \rbrace$ and $T= \mathcal{N}(\alpha_2)\setminus\lbrace 1 \rbrace$. $T$ has size $\delta_\Gamma -1$ and each $t\in T$ has degree at least one.
 It is easy to see that at least $\delta_{\Gamma}-1$ distinct vertices from $R$ must intersect with $\alpha_2$ and thus $\vert \Gamma_{\delta_{\Gamma}} \vert = \vert R \vert +2 \geq\delta_{\Gamma}-1+2=  \delta_{\Gamma}+1$.

\end{proof}
\begin{corollary}\label{coro_delta+1} For all $\Gamma$ of size $\delta+2$ that sum to $0$ and satisfy property A, either $\vert \Gamma_1 \vert \geq \delta+1$ or $\vert \Gamma_\delta \vert \geq \delta+1$. 
\end{corollary}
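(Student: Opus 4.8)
The plan is to treat this corollary as a clean restatement of the base case, Lemma~\ref{Base_case}, disposing of two pieces of bookkeeping: translating the conclusions $A.1)$ and $A.2)$ into the $\Gamma_1,\Gamma_\delta$ language, and reconciling the global degree gap $\delta$ with the local minimum degree $\delta_\Gamma$. Observe first that $A.1)$ is precisely $|\Gamma_1|\geq\delta+1$ and $A.2)$ is precisely $|\Gamma_\delta|\geq\delta+1$, and that $\delta_\Gamma\geq\delta$ always, since every column of weight greater than one has weight at least the degree gap $\delta$.

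First I would discharge the hypotheses of Lemma~\ref{Base_case}. Given $\Gamma$ of size $\delta+2$ with $\sum\Gamma=0$ and Property A, conditions $1$--$3$ of Section~\ref{WLoG} are immediate: the degrees of the $\alpha_i$ are at least $\delta$ by the degree gap, every $x\in X$ has even degree because $\sum\Gamma=0$ over $\mathbb{F}_2$, and the decreasing ordering of the $\alpha_i$ is without loss of generality. For conditions $4$ and $5$ I would invoke Observation~\ref{obs3.1}: if some $\alpha_j$ is intersected by no standard basis vector, then $A.2)$ already holds and we are finished; so I may assume every $\alpha_j$ meets some $e_k$, placing $\Gamma$ in the form Lemma~\ref{Base_case} requires.

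Next comes a case split on $\delta_\Gamma$. If $\delta_\Gamma=\delta$, then $|\Gamma|=\delta+2=\delta_\Gamma+2$ and Lemma~\ref{Base_case} applies directly, giving $A.1)$ or $A.2)$, i.e. the desired dichotomy. If $\delta_\Gamma>\delta$, the size forces $|\Gamma|=\delta+2\leq\delta_\Gamma+1$, so I fall back on Corollary~\ref{size_delta1} instead. When $\delta_\Gamma=\delta+1$ we have $|\Gamma|=\delta_\Gamma+1$, and the only two shapes permitted by Corollary~\ref{size_delta1} are $\Gamma=S_{\delta_\Gamma}\cup E(S_{\delta_\Gamma})$, which supplies $\delta_\Gamma=\delta+1$ weight-one vectors so that $|\Gamma_1|\geq\delta+1$, and a $\Gamma$ all of whose $\delta+2$ columns have weight $\delta_\Gamma+1>1$ so that $|\Gamma_\delta|=\delta+2\geq\delta+1$. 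When $\delta_\Gamma>\delta+1$ we have $|\Gamma|<\delta_\Gamma+1$, and Corollary~\ref{size_delta1} rules out any such $\Gamma$, so that case is vacuous.

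Since the corollary only repackages the base case, I expect no genuine obstacle; the single point that needs care is the possibility $\delta_\Gamma>\delta$, where the hypothesis $|\Gamma|=\delta+2$ no longer lines up with the $\delta_\Gamma+2$ of Lemma~\ref{Base_case} and one must route through Corollary~\ref{size_delta1} rather than through the base case itself.
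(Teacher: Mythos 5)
Your proof is correct and takes essentially the same route as the paper, whose one-line argument likewise combines Observations~\ref{obs3.1} and~\ref{obs3.2} with Lemma~\ref{Base_case}. Your extra case split on $\delta_\Gamma > \delta$ (routed through Corollary~\ref{size_delta1}) is careful bookkeeping that the paper glosses over---in the paper's own use of this corollary, as the base case of Lemma~\ref{end}, one has $\delta = \delta_\Gamma$ anyway---so it is harmless and makes the statement hold under either reading of $\delta$.
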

\begin{proof}
The proof follows directly from the reasoning of Observations~$\left(\ref{obs3.1},\ref{obs3.2}\right)$ combining with Lemma~\ref{Base_case}.
\end{proof}
Now we are ready for the main lemma.
The proof will go along the lines similar to Lemma~\ref{Base_case} and inducting over the difference between the size and the gap of $\Gamma$.

We recall the structure and conditions on $\Gamma$ under which it is sufficient to prove.

$\mathcal{G}_\Gamma = \left( \Gamma \sqcup X, \mathsf{E} \right)$ where $X= \lbrace 1,2, \ldots, n \rbrace$ and

 $\Gamma= \lbrace \alpha_1,\alpha_2, \ldots, \alpha_{\vert \Gamma_{\delta_{\Gamma}} \vert}, e_{i_{1}}, e_{{i}_2}, \cdots, e_{i_{\vert\Gamma_1\vert}} \rbrace$ 
\begin{description}
\item[1.] $\deg(\alpha_i) \geq \delta_\Gamma$, $\mathcal{N}({e_i})=i$ and  $\deg(x \in X)$ is even
\item[2.] For $i\neq j$, $\vert \mathcal{N}(\alpha_i) \cap \mathcal{N}(\alpha_j) \vert \leq 1$ \ldots  (\textbf{Property A})
\item[3.] $\deg(a_i) \geq \deg(a_j)$ if $i<j$ \ldots (WLOG)

\item[4.] $i_1 \in \mathcal{N}(\alpha_1)$, each $v \in \Gamma$ intersects with at least $\deg(v)$ vertices of $\Gamma$\ldots (Observation~\ref{obs3.1})

\item[5.] For all $\alpha_j$, there exists $e_{{k}} \in \Gamma$ such that $\alpha_j$ and $e_{{k}}$ intersect
\item[6.] $\delta_\Gamma +2 \leq \vert \Gamma \vert \leq 2\delta_{\Gamma}-1$ \ldots (Observation~\ref{obs3.2}, Lemma~\ref{no_low _weight})

\end{description}

\begin{lemma} \label{end} For the $\Gamma$ of size at least $\delta_{\Gamma}+2$, at most of size $2 \delta_\Gamma-1$ such that sums to $0$ and satisfies $Property\ A$ either $\vert \Gamma_1\vert \geq \delta_\Gamma+1$ or $\vert \Gamma_{\delta_{\Gamma}}\vert \geq \delta_\Gamma+1$.
To put it alternatively, for $\Gamma$ satisfying conditions $1,2,6$ above, $\max{\lbrace \vert \Gamma_1\vert ,\vert \Gamma_{\delta_{\Gamma}} \vert\rbrace \geq \delta_\Gamma+ 1}$
\end{lemma}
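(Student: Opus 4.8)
The plan is to prove Lemma~\ref{end} directly by a counting argument that generalizes the private-coordinate idea already used in Observation~\ref{obs3.3}, rather than by inducting on $X_\Gamma$. Write $l = |\Gamma_{\delta_\Gamma}|$ for the number of non-basis columns $\alpha_i$. If $l \geq \delta_\Gamma + 1$ we are immediately in the branch $|\Gamma_{\delta_\Gamma}| \geq \delta_\Gamma + 1$, so I would assume throughout that $1 \leq l \leq \delta_\Gamma$ and aim to force the other branch $|\Gamma_1| \geq \delta_\Gamma + 1$. The only structural facts I would need are Property~A (condition~2), the fact that every $x \in X$ has even degree because $\sum \Gamma = 0$ (condition~1), and the size window $\delta_\Gamma + 2 \leq |\Gamma| \leq 2\delta_\Gamma - 1$ (condition~6).

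The heart of the argument is a lower bound on $|\Gamma_1|$ in terms of $l$. For each $\alpha_i$ call a coordinate \emph{private} if it lies in $\mathrm{supp}(\alpha_i)$ but in no other $\mathrm{supp}(\alpha_j)$. Since $\alpha_i$ shares at most one coordinate with each of the other $l-1$ columns (Property~A), it has at least $\deg(\alpha_i) - (l-1) \geq \delta_\Gamma - (l-1)$ private coordinates. At any private coordinate $x$ of $\alpha_i$, the only column carrying an $\alpha$-contribution is $\alpha_i$ itself, so evenness of $\deg(x)$ forces the basis vector $e_x$ to belong to $\Gamma_1$; distinct private coordinates, and in particular those belonging to different $\alpha_i$, yield distinct elements of $\Gamma_1$. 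Summing over the $l$ disjoint private sets gives
\begin{equation*}
|\Gamma_1| \;\geq\; l\bigl(\delta_\Gamma - l + 1\bigr),
\end{equation*}
valid whenever $l \leq \delta_\Gamma$. This is exactly Observation~\ref{obs3.3} at $l = \delta_\Gamma$, now extended to all admissible $l$.

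It then remains to analyse the quadratic $g(l) = l(\delta_\Gamma - l + 1)$ on $\{1, \ldots, \delta_\Gamma\}$. Being concave, $g$ attains its minimum over the interior range $2 \leq l \leq \delta_\Gamma - 1$ at an endpoint, and both endpoints give $g(l) = 2\delta_\Gamma - 2 \geq \delta_\Gamma + 1$; here I would note that the size window is nonempty only when $\delta_\Gamma \geq 3$, which is precisely what makes this inequality hold. Thus for $2 \leq l \leq \delta_\Gamma - 1$ we get $|\Gamma_1| \geq \delta_\Gamma + 1$ outright. I expect the delicate part to be the two boundary values $l = 1$ and $l = \delta_\Gamma$, where $g$ dips to only $\delta_\Gamma$: at $l = 1$ every coordinate of the single column $\alpha_1$ is private, so $|\Gamma_1| = |\Gamma| - 1 \geq \delta_\Gamma + 1$ by the lower size bound, while $l = \delta_\Gamma$ would force $|\Gamma| \geq 2\delta_\Gamma$ and is excluded by the upper size bound. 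The case $l = \delta_\Gamma$ is exactly the highly symmetric configuration $B)$ of Lemma~\ref{main_lemma}, so it is reassuring that it is precisely the one killed by condition~6; pinning down that the counting bound is tight at both endpoints, and that the two size constraints are each needed to dispose of one endpoint, is the main thing to get right.
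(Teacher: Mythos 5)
Your proposal is correct, but it takes a genuinely different route from the paper's. The paper proves Lemma~\ref{end} by induction on $X_\Gamma = \vert\Gamma\vert - \delta_\Gamma - 1$: the base case is the size-$(\delta_\Gamma+2)$ analysis of Lemma~\ref{Base_case}, and the inductive step replaces an intersecting pair $\lbrace \alpha_1, e_{i_1}\rbrace$ (or two such pairs) by their sum, shrinking $\Gamma$ while preserving Property A and the zero sum; this requires the WLOG reductions (conditions 3--5, hence Observations~\ref{obs3.1}--\ref{obs3.3}) and careful bookkeeping of how $\delta_{\tilde\Gamma}$ and $\vert\tilde\Gamma\vert$ move under merging. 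You instead globalize the private-coordinate injection that the paper uses only in Observation~\ref{obs3.3} at $l = \delta_\Gamma$: Property A gives each $\alpha_i$ at least $\deg(\alpha_i) - (l-1)$ coordinates seen by no other $\alpha_j$, evenness of coordinate degrees forces a distinct element of $\Gamma_1$ at each such coordinate, and summing over the disjoint private sets yields $\vert\Gamma_1\vert \geq l(\delta_\Gamma - l + 1)$. Your case analysis then closes correctly: for $2 \leq l \leq \delta_\Gamma - 1$ concavity gives $\vert\Gamma_1\vert \geq 2\delta_\Gamma - 2 \geq \delta_\Gamma + 1$ (using $\delta_\Gamma \geq 3$, which nonemptiness of the window in condition 6 supplies), $l = \delta_\Gamma$ is excluded by the upper size bound exactly as in Observation~\ref{obs3.3}, and $l = 1$ is handled by the lower size bound since then $\vert\Gamma_1\vert = \vert\Gamma\vert - 1$. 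What your version buys: it is non-inductive, needs only conditions 1, 2, 6 (no WLOG reductions), and in fact proves something stronger, namely that for $2 \leq l \leq \delta_\Gamma$ one has $\vert\Gamma\vert \geq l + l(\delta_\Gamma - l + 1) \geq 2\delta_\Gamma$, so under condition 6 only $l = 1$ or $l \geq \delta_\Gamma + 1$ can occur at all. What the paper's induction buys is reusable machinery: the merging trick and the base-case classification (Corollary~\ref{size_delta1}) are what identify the rigid size-$2\delta$ configurations of case $B$ of Lemma~\ref{main_lemma} that feed into Lemma~\ref{end_cor}. The only loose end in your write-up is $l = 0$, which you silently exclude; it is trivially impossible (a nonempty set of distinct standard basis vectors cannot sum to zero), but it deserves a sentence.
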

\begin{proof} 
We prove this by induction on $X:= \vert \Gamma \vert- \delta_\Gamma -1$.

Case $X=1$ is done by Corollary~\ref{coro_delta+1}. 

\textbf{ Induction hypothesis}~(IH): for all $\tilde{\Gamma}$ satisfying $1,2,6$ and   $X(\tilde{\Gamma}) = r-1$ ,
 either $\vert \tilde{\Gamma}_1\vert \geq \delta_{\tilde{\Gamma}}+1$ or $\vert \Gamma_{\delta_{\tilde{\Gamma}}}\vert \geq \delta_{\tilde{\Gamma}}+1$.
 
We will show that for $\Gamma$ with $X_\Gamma=r >1$, the lemma holds.
Note that we can assume without loss of generality that $3,4,5$ also holds for $\Gamma$.

\textbf{Case 1: } $\deg(\alpha_i) > \delta_\Gamma$.

Consider $\tilde{\Gamma}=  \lbrace \alpha_1+e_{i_{1}} \rbrace \cup \Gamma \setminus \lbrace \alpha_1, e_{i_{1}} \rbrace$.

Clearly $\tilde{\Gamma}$ satisfies $1-2$ with $\delta_{\tilde{\Gamma}}= \delta_{\Gamma}$ and $\vert \tilde{\Gamma}\vert = \vert \Gamma\vert-1$.
Thus $X(\tilde{\Gamma})=r-1$.
Moreover $\vert \Gamma_1\vert \geq \vert \tilde{\Gamma}_1\vert+1$ and $\vert \Gamma_{\delta_{\Gamma}}\vert \geq \vert \tilde{\Gamma}_{\delta_{\tilde{\Gamma}}} \vert$.
Now if $\tilde{\Gamma}$ satisfies $6$ then by \textbf{IH} we are done.
If it does not, then $\vert \tilde{\Gamma}\vert= 2 \delta_{\tilde{\Gamma}}$ which gives $\max{\lbrace\vert \tilde{\Gamma}_1\vert+1, \vert  \tilde{\Gamma}_{\delta_{\tilde{\Gamma}}}\vert\rbrace} \geq \delta_{\tilde{\Gamma}}+1$.
Hence, $\max \lbrace \vert \Gamma_1 \vert, \vert \Gamma_{\delta_{\Gamma}}\vert \rbrace \geq \delta_{\Gamma}+1$.

\textbf{Case 2:} All the vertices have the same degree $\deg(\alpha_i)=\delta_\Gamma$.
We divide this further into two parts.

\noindent \underline{{Case 2.a)}} There exists $j\neq 1$, $\alpha_j$ such that $\alpha_2$ intersects with $e_{i_{j}}$.
 Without loss of generality $j=2$.

Consider $\tilde{\Gamma}:= \lbrace \alpha_1+ e_{i_{1}},  \alpha_2+ e_{i_{2}}\rbrace \cup \Gamma \setminus \lbrace \alpha_1, e_{i_{1}}, \alpha_2, e_{i_{2}} \rbrace$

Again, $\tilde{\Gamma}$ satisfies $1-2$ with $X(\tilde{\Gamma})= \vert \tilde{\Gamma}\vert- \delta_{\tilde{\Gamma}}-1= \left(\vert \Gamma \vert-2\right)- \left(\delta_{\Gamma}-1\right)-1= r-1$.
 Note that in this case, $\vert \Gamma_1\vert \geq \vert \tilde{\Gamma}_1\vert+2$ and $\vert \Gamma_{\delta_{\Gamma}}\vert \geq \vert \tilde{\Gamma}_{\delta_{\tilde{\Gamma}}} \vert$.
  So if $\vert \tilde{\Gamma}\vert$ satisfies $6$ 
  then by \textbf{IH}, $\max \lbrace \vert \tilde{\Gamma}_{\delta_{\tilde{\Gamma}}} \vert, \vert \tilde{\Gamma}_1 \vert\rbrace \geq \delta_{\tilde{\Gamma}}+1$.
  Hence $\max \lbrace \vert \Gamma_1\vert -2 ,\vert \Gamma_{\delta_{\Gamma}} \vert\rbrace \geq \delta_\Gamma$.
  Note that by Observation~\ref{obs3.3}, $\vert \Gamma_{\delta_{\Gamma}} \vert \neq \delta_{\Gamma_{\delta}}$.
  Thus if $\vert \Gamma_1\vert -2 \leq \vert \Gamma_{\delta_{\Gamma}} \vert$ then $\max$ is $\vert \Gamma_{\delta_{\Gamma}} \vert \neq \delta_{\Gamma_{\delta}}$.
  Otherwise,
    $\vert \Gamma_1 \vert -2 \geq \Gamma_{\delta_{\Gamma}}$ and hence $\max\lbrace \vert \Gamma_1\vert ,\vert \Gamma_{\delta_{\Gamma}} \vert\rbrace \geq \delta_\Gamma+2.$
 
\noindent \underline{{Case~2.b)}} Exactly same as Case 2b) of Lemma~\ref{Base_case}. \end{proof}
 
Recall that it suffices to check Lemma~\ref{main_lemma} for $\Gamma$ satisfying conditions 1-6, and it follows for the rest of an arbitrary $\Gamma$.
Thus, for any $\Gamma \subseteq \mathcal{S}$, we have shown that~$\ref{main_lemma}$ holds and that completes our proof of Lemma~\ref{main_lemma}.
 \paragraph{finishing up the proof} \label{finishing up}
Apply Lemma~\ref{main_lemma} on $\mathcal{S}$ to be columns of the matrix $\left[\mathbb{I} \vert A_G\right]$.
Let $\mu\left({\Gamma}\right)$ be the characteristic vector (over columns) for $\Gamma$.
That is, $\mu(\Gamma)(i)=1$ iff $i$th column of $\left[\mathbb{I} \vert A_G \right]$ is in $\Gamma$.
Clearly, $\sum\Gamma=0$ iff $\mu(\Gamma) \in \ker\left(\mathbb{I} \vert A_G\right)$.
Recall that diagonal distance is the minimum symplectic weight for a non-trivial vector in the kernel.
By Lemma~\ref{main_lemma}, $\Gamma$ must satisfy at least one of the conditions $O, A.1, A.2, B$ or $C$.
Condition $O$ indicates $\mu\left(\Gamma\right)$ is the trivial vector.
By definition, for $\Gamma$ satisfying condition $A.1$ or $A.2$, symplectic weight is at least $\delta+1$. It is not hard to see that the condition $C$ also has a symplectic weight of $\delta+1$.
 The adjacency matrix $A_G$ has all diagonal entries $0$, hence if $S_\delta$ is the $i$th column of $A_G$ then $e_i \notin E(S_\delta)$ which gives the symplectic weight of $\delta+1$.
 Condition $B$ is the only case that can have symplectic weight $\delta$.
  And thus, in either of the cases, diagonal distance is lower bounded by $\delta$.
   
We note that this can be improved further since case $B$ can be further broken up into cases having diagonal distance $\delta$ and those having $\delta+1$.
Accounting for the locations of columns, $G$ has diagonal distance $\delta$ only if $\left[ \mathbb{I} \vert A_G\right]$ contains $\delta$ columns of weight $\delta$ such that every pair has common support and no two differ. 
These are precisely the cases of size $2\delta$ we get, after replacing inequalities in conditions of section~\ref{Sufficient} by equalities.
These cases have high symmetry and regularity, and they form a tiny fraction of all the cases, and hence, most of the graphs with no 4-cycle, even from case $B$, saturate the diagonal distance bound.
The following lemma states those rare cases of high symmetry where diagonal distance is $\delta$. 
\begin{lemma} \label{end_cor}Let $G= \left( \mathsf{V}, \mathsf{E}\right)$ be a $4$-cycle free graph with minimum degree $\delta$.
Then $\Delta^\prime(G)= \delta$ if and only if it contains a set of vertices $V^\prime \subseteq V$ of size $\delta$ satisfying the following:
\begin{enumerate}[1.]
\item All the vertices in $V^\prime$ are vertices of minimum degree. That is, for all $v \in V^\prime$, the  degree of $v$ (computed in $G$) is $\delta$.
Additionally, $\delta$ must be even.
\item for every distinct pair $u,v \in V^\prime$, there exists a path of distance two (in G) between them.
Moreover all of these paths are disjoint, that is they do not share an edge.
\item for each $v \in V^\prime$, the set $V^\prime$ contains exactly one neighbor of $v$.
\end{enumerate}
Otherwise, for graph $G$ with no 4-cycle $\Delta^\prime(G)=\delta+1$.
\end{lemma}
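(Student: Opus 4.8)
The plan is to reduce the equality $\Delta'(G)=\delta$ to a single combinatorial identity and then settle each direction by a tight edge count. Recall that by Lemma~\ref{main_lemma} we always have $\Delta'(G)\geq\delta$, while the minimum-degree upper bound gives $\Delta'(G)\leq\delta+1$, so $\Delta'(G)\in\{\delta,\delta+1\}$ and it suffices to decide exactly when the value $\delta$ occurs. A nontrivial kernel vector of $(\mathbb{I}\,|\,A_G)$ of symplectic weight $\delta$ corresponds via $\mu$ to a set $\Gamma$ with $\sum\Gamma=0$, whose columns split into chosen identity columns $\{e_v:v\in J\}$ and chosen adjacency columns $\{A_v:v\in V'\}$. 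Its symplectic weight equals $|J\cup V'|$, and by Lemma~\ref{main_lemma} the only way to reach weight $\delta$ (cases $O,A.1,A.2,C$ all force $\geq\delta+1$) is case $B$, i.e.\ $|J|=|V'|=\delta$. Hence weight $\delta$ forces $J=V'$, and I would record the clean reformulation: $\Delta'(G)=\delta$ if and only if there is $V'\subseteq V$ with $|V'|=\delta$ and $\sum_{v\in V'}A_v=\sum_{v\in V'}e_v$ over $\mathbb{F}_2$; equivalently, writing $d_w:=|N(w)\cap V'|$, the condition $(\star)$ that $d_w$ is odd exactly when $w\in V'$.

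For the direction ``conditions $\Rightarrow\Delta'(G)=\delta$'' I would take the hypothesized $V'$ and verify $(\star)$ directly. Condition~3 gives $d_w=1$ for $w\in V'$. For $w\notin V'$ I count edges leaving $V'$: condition~1 makes this number $\delta(\delta-1)$, while condition~2 furnishes $\binom{\delta}{2}$ edge-disjoint length-two paths. Since $G$ is $4$-cycle free, any pair of $V'$-vertices has a unique common neighbor, and by condition~3 that midpoint lies outside $V'$; edge-disjointness then forbids any midpoint from being adjacent to three or more vertices of $V'$, so every midpoint meets $V'$ in exactly two vertices. The $\binom{\delta}{2}$ midpoints thus absorb $2\binom{\delta}{2}=\delta(\delta-1)$ external edges, i.e.\ all of them, so no external vertex meets $V'$ in exactly one vertex. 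Hence $d_w\in\{0,2\}$ for $w\notin V'$, establishing $(\star)$ and producing the desired weight-$\delta$ kernel vector.

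For the converse ``$\Delta'(G)=\delta\Rightarrow$ conditions'' I would start from $(\star)$ and run an extremal count. Set $S:=\sum_{v\in V'}\deg(v)=\sum_w d_w\geq\delta^2$, and let $C:=\sum_w\binom{d_w}{2}$; by Property~A each $V'$-pair has at most one common neighbor, so $C$ equals the number of $V'$-pairs with a common neighbor and $C\leq\binom{\delta}{2}$. Separating the odd entries $a_w$ ($w\in V'$) from the nonzero even entries $b_w$ ($w\in Q:=\{w\notin V':d_w\geq2\}$) and using $\binom{a}{2}\geq a-1$ and $\binom{b}{2}\geq b-1$, I obtain $C\geq S-\delta-|Q|$, together with $|Q|\leq C\leq\binom{\delta}{2}$. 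Feeding in $S\geq\delta^2$ squeezes $|Q|=C=\binom{\delta}{2}$ and $S=\delta^2$, forcing every auxiliary inequality to be tight. Tightness of $S\geq\delta^2$ yields condition~1 (all degrees equal $\delta$); tightness $\binom{a_w}{2}=a_w-1$ with $a_w$ odd gives $a_w=1$, which is condition~3 (so $G[V']$ is a perfect matching and $\delta$ is even); and $C=\binom{\delta}{2}$ with $b_w=2$ for all $w\in Q$ puts the $\binom{\delta}{2}$ pairs in bijection with their distinct degree-two midpoints, which is exactly condition~2. The final clause ``otherwise $\Delta'(G)=\delta+1$'' is then immediate from $\Delta'(G)\in\{\delta,\delta+1\}$ and the contrapositive of this converse.

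The main obstacle is the converse: arranging the several inequalities so that all of them become simultaneous equalities, and then reading off precisely conditions~1--3 from the local tightness $a_w=1,\ b_w=2$ (in particular deducing that $\delta$ is even and that the length-two paths are edge-disjoint). The remaining bookkeeping---locating midpoints, identifying edge-disjointness with ``no midpoint meets $V'$ in three or more vertices,'' and matching the two external-edge counts---is routine.
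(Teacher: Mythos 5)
Your proposal is correct, and both directions check out. It shares its first step with the paper: both use Lemma~\ref{main_lemma} to see that a kernel vector of symplectic weight $\delta$ can only arise from case $B$, and both observe that the weight count forces the $\delta$ identity columns to sit at the same coordinates as the $\delta$ adjacency columns, which is exactly your reformulation $\sum_{v\in V'}A_v=\sum_{v\in V'}e_v$, i.e.\ condition $(\star)$. From there the two arguments diverge. The paper derives conditions 1--3 by a \emph{local, per-column cancellation} argument: by Property~A each adjacency column $\alpha_i$ shares at most one coordinate with every other, so at least $wt(\alpha_i)-(\delta-1)\geq 1$ of its entries survive in the sum; since the total surviving support has size $\delta$, exactly one entry per column survives, which forces every column to have weight exactly $\delta$ and every pair of columns to intersect in a distinct coordinate, and conditions 2, 3 and the evenness of $\delta$ are then read off. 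You instead run a \emph{global extremal count} ($S\geq\delta^2$, $C\leq\binom{\delta}{2}$ via Property~A, $|Q|\leq C$, and the convexity bound $\binom{a}{2}\geq a-1$), squeeze everything to equality, and read the three conditions off the tight cases $a_w=1$, $b_w=2$, $S=\delta^2$, $C=|Q|=\binom{\delta}{2}$. Your route is somewhat longer but buys two things: the tightness analysis makes explicit the distinctness of the midpoints and hence the edge-disjointness of the length-two paths, which the paper dispatches with the brief remark that intersections occur in ``distinct and unique support''; and you prove the ``if'' direction (conditions $\Rightarrow$ a weight-$\delta$ kernel vector) by an explicit external-edge count, whereas the paper's written proof only derives the conditions from an assumed weight-$\delta$ kernel vector and leaves the converse implicit. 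The paper's local argument, in exchange, is shorter and stays closer to the column picture used throughout Section~\ref{sec:Main Lemma}.
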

\begin{proof} Proof follows from the fact that the only condition can give rise to a vector in the kernel with symplectic weight $\delta$ is condition $B$.
Such a vector must have $\delta$ columns of weight $\delta$ and $\delta$ columns of weight $1$.
Suppose columns of weight $\delta$ are $\alpha_1,\alpha_2,\ldots,\alpha_\delta$ and weight 1 columns are $e_{\beta_1},e_{\beta_2},\ldots,e_{\beta_\delta}$.
Let $\mathcal{A}=\sum_{i} \alpha_i$ and $\mathcal{B}=\sum_i \beta_i$.
Now with the kernel, we get, $\mathcal{A}= \mathcal{B}$.
Hence $\vert supp\left(\mathcal{A}\right) \vert= \vert supp\left(\mathcal{B}\right) \vert = \delta$.
Now note that  because of \textbf{Property A}, for each $\alpha_i$, we get $\vert supp(\alpha_i) \cap supp(\mathcal{A}) \vert \geq 1$.
Hence, at least one non-zero from each column survives. 
Note that more than one entry can not survive since the support of sum $\mathcal{A}$ is $\delta$.
Thus exactly one entry from each column survives, and this can only happen if each pair of columns intersects and every column is of weight $\delta$.
Let $\vert supp(\alpha_i) \cap supp(\alpha_j) \vert =h_{ij}$ for $i \neq j$.
Then vertex $h_{ij}$ is the neighbor of both $i$ and $j$, giving a path of length 2.
Disjointness of the path follows from the same argument showing each pair must intersect in distinct and unique support.
This shows $2.$
Now we can assume this $\Gamma$ to be in the following form:

$\Gamma = \alpha_1,\alpha_2,\ldots,\alpha_\delta ,e_{\beta_1},e_{\beta_2},\ldots,e_{\beta_\delta}$ such that  
 $\alpha_i$ corresponds to adjacency list of $v_{\beta_i}$ and for each $\alpha_i$ there exists unique $\beta_j$ such that $\beta_j \in supp(\alpha_i)$. 
Thus, $A_G[\alpha_i][\beta_j]=1$ and hence, $v_{\alpha_i}$ and $v_{\beta_j}$ are neighbors.
This gives a unique neighbor for each $\alpha_i$ proving 3.
Since there is exactly one neighbor for each $\alpha_i$ it follows that they occur in pairs and hence $\delta$ must be even.
\end{proof}

\section{{Degeneracy of CWS codes} \label{sec:degeneracy}}
\begin{lemma}\label{degeneracy_conditions} Let $\mathcal{M} = \left(G,C \right)$ be a CWS code that is degenerate then at least one of the following is true:
\begin{itemize}
\item[1.] $G$ contains a cycle of length 3 or 4.
\item[2.] $C$ is a classical degenerate code.
\end{itemize}
So, a degenerate code with a non-degenerate classical component must have a short cycle in the graph.
\end{lemma}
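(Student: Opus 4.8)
The plan is direct. Assume $\mathcal{M} = (G,C)$ is degenerate, which by the characterization in Section~\ref{subsec:cws-codes} means $d(\mathcal{M}) > \Delta'(G)$. I first dispose of the two easy conclusions: if $C$ is classically degenerate we are already in case $2$, and if $G$ contains a $4$-cycle we are in case $1$. So assume henceforth that $G$ is $4$-cycle-free and that $C$ uses every component, i.e. for each coordinate $i$ there is a codeword $c \in C$ with $c_i = 1$; the goal is then to exhibit a $3$-cycle in $G$. Since $G$ is $4$-cycle-free its columns satisfy \textbf{Property A}, so Lemma~\ref{end_cor} applies and pins down $\Delta'(G) \in \{\delta, \delta+1\}$. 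The heart of the argument is to rule out $\Delta'(G) = \delta+1$ under these standing assumptions, because the remaining value $\Delta'(G) = \delta$ will hand us the triangle directly.

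To rule out $\Delta'(G) = \delta+1$, I would build an explicit small logical error. Pick a vertex $i$ of minimum degree $\delta$ and set $E = Z(A_i)X(e_i)$. Its image $\pi(E) = (A_i, e_i)$ lies in $\ker(\mathbb{I}\vert A_G)$: indeed $Cl_S(E) = A_i \oplus r_i = 0$ because $r_i = A_i$ by symmetry of $A_G$. Its symplectic weight is exactly $\delta+1$, since $A_G$ has zero diagonal forces $i \notin supp(A_i)$, so the $X$-support $\{i\}$ and the $Z$-support $supp(A_i)$ are disjoint and of sizes $1$ and $\delta$. Because $C$ uses component $i$, there is $c \in C$ with $c_i = 1$, and then $Z(c)X(e_i) = (-1)^{c_i}X(e_i)Z(c) = -X(e_i)Z(c)$, so $E$ anticommutes with the word operator $Z(c)$. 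By Theorem~\ref{CWS condition}, an error with $Cl_S(E) = 0$ that fails to commute with some word operator is \emph{not} detected; hence $E$ is a nontrivial undetected error of weight $\delta+1$, giving $d(\mathcal{M}) \le \delta+1$. If $\Delta'(G) = \delta+1$ this yields $d(\mathcal{M}) \le \Delta'(G)$, i.e. nondegeneracy, contradicting our hypothesis. Therefore $\Delta'(G) = \delta$.

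Finally, with $\Delta'(G) = \delta$, Lemma~\ref{end_cor} supplies a set $V' \subseteq V$ of $\delta$ minimum-degree vertices satisfying its conditions $1$--$3$. Condition $3$ says each vertex of $V'$ has exactly one neighbor inside $V'$, so the subgraph induced on $V'$ is a perfect matching; in particular there is a matched pair $u,v \in V'$ with $u \sim v$. Applying condition $2$ to this (distinct) pair produces a common neighbor $w$ realizing a length-two path between them, and $w \notin \{u,v\}$ since $G$ has no loops. Then $u,v,w$ form a triangle, so $G$ contains a $3$-cycle, placing us in case $1$ and completing the proof.

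The step I expect to be the main obstacle is the middle one: correctly translating commutation of $E$ with the word operators $Z(c)$ into the scalar condition $c_i = 0$, and then invoking Theorem~\ref{CWS condition} to certify that the constructed $E$ is genuinely undetected rather than merely a kernel vector. Everything else reduces to bookkeeping on supports (the weight count for $E$) and a direct appeal to the structural characterization in Lemma~\ref{end_cor}, from which the triangle falls out by combining its internal matching with its edge-disjoint length-two paths.
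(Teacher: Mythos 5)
Your proof is correct and follows essentially the same route as the paper's: bound $d(\mathcal{M}) \le \delta+1$ using the fact that $C$ uses a minimum-degree coordinate, conclude $\Delta'(G) = \delta$, and then extract a triangle from the structural characterization in Lemma~\ref{end_cor} (matched pair from condition~3 plus the length-two path from condition~2). The only difference is cosmetic: where the paper cites the bound $d \le wt(S_i)$ as well known, you prove it explicitly by exhibiting the undetected error $E = Z(A_i)X(e_i)$ (which is just the stabilizer generator $S_i$) and invoking Theorem~\ref{CWS condition}, which is a nice self-contained touch but not a different argument.
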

\begin{proof}Let $(G,C)$ be the CWS codes with $S= \lbrace S_1, S_2,\ldots, S_n\rbrace$ as the generators of the stabilizer group in standard form.
We will show that if $G$ does not have a cycle of length four and $C$ is nondegenerate, then $G$ will have a cycle of length 3.
It is very well known that if classical code uses the $i$-th bit, in other words, there exists a codeword $c\in C$ such that $c_i\neq 0$, then the distance of CWS code is upper bounded by $wt(S_i)$.
Now Suppose $C$ is nondegenerate. 
This gives us $d\left( \left(G, C \right)\right) \leq \delta+1$.
Note that for $(G,C)$ to be degenerate we must have $\Delta^\prime\left(G\right) < d\left( \left(G, C \right)\right) \leq \delta+1$.
Thus for a degenerate quantum code $(G,C)$ with nondegenerate $C$ we will have $\Delta^\prime (G) \leq \delta$. 
Now from Lemma~\ref{end_cor}, if graph $G$ is four-cycle free and $\Delta^\prime \leq \delta$, then every $u \in V$ must have a neighbor in $V$ (condition 3). 
Furthermore, there must be a path of length $2$ between $u$ and $v$. 
This gives a triangle between $u,v$ and some vertex $w$ such that $u-w-v$ is a path in $G$.
\end{proof}
This completes our proof of the necessary condition for quantum degeneracy. 
Note that the actual statement we can prove is stronger than the mere existence of short cycles.
Using~\ref{end_cor}, we get that if $G$ does not contain a cycle of length 4 and $C$ is nondegenerate, then $G$ must contain a particular subgraph of size $\delta$.
This subgraph further contains a triangle, giving us lemma~\ref{degeneracy_conditions}.
Depending on the use case, we might consider other properties of the subgraph; for example, it not only contains one triangle but contains at least $\frac{\delta}{2}$ disjoint triangles (sharing no common vertex), and it also contains at least $\delta$ vertices of degree $\delta$.

We give a finer version of the above lemma which captures degeneracy in a slightly better manner.
\begin{corollary} A CWS code without short cycles (of length $\leq 4$), is degenerate only if classical code is degenerate in all the minimum degree vertices.
\end{corollary}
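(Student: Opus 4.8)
The plan is to upgrade the qualitative dichotomy of Lemma~\ref{degeneracy_conditions} into this sharp quantitative statement by first pinning down $\Delta^\prime(G)$ exactly under the girth hypothesis, and then squeezing the quantum distance $d(\mathcal{M})$ between a lower bound coming from degeneracy and an upper bound coming from the coordinates that $C$ uses. Throughout, the governing fact is that $\mathcal{M}=(G,C)$ is degenerate precisely when $\Delta^\prime(G) < d(\mathcal{M})$, so the whole argument reduces to comparing three numbers: $\Delta^\prime(G)$, $d(\mathcal{M})$, and $\deg(v_i)$ for the coordinates $i$ that $C$ actually uses.

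First I would compute $\Delta^\prime(G)$ for a graph with no cycle of length $\le 4$. Such a $G$ is in particular $4$-cycle free, so Lemma~\ref{end_cor} applies, and $\Delta^\prime(G)=\delta$ can hold only when $G$ contains the distinguished set $V^\prime$ satisfying its conditions $1$--$3$. But condition $3$ supplies, for a chosen $v\in V^\prime$, a neighbor $u\in V^\prime$, i.e.\ an edge $uv$, while condition $2$ supplies a length-two path from $u$ to $v$ through some middle vertex $w$; the edge $uv$ together with the edges $uw$ and $wv$ then closes into a triangle. Since the girth hypothesis forbids triangles, this case cannot occur, and Lemma~\ref{end_cor} forces $\Delta^\prime(G)=\delta+1$. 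This is exactly the triangle-extraction already performed inside the proof of Lemma~\ref{degeneracy_conditions}, now used to rule out $\Delta^\prime=\delta$ rather than to produce a $3$-cycle.

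Next I would run the two-sided bound on $d(\mathcal{M})$. Degeneracy gives $d(\mathcal{M}) > \Delta^\prime(G) = \delta+1$, hence $d(\mathcal{M})\ge \delta+2$. On the other hand, the standard coordinate-usage bound invoked in Lemma~\ref{degeneracy_conditions} (if $C$ uses coordinate $i$, meaning some codeword has $c_i\neq 0$, then $d(\mathcal{M})\le wt(S_i)$) gives, for any minimum-degree coordinate $i$ with $\deg(v_i)=\delta$, the value $wt(S_i)=\deg(v_i)+1=\delta+1$, since $S_i=X(e_i)Z(r_i)$ has support $\{i\}\cup \mathcal{N}(i)$. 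If $C$ used even one such coordinate we would obtain $d(\mathcal{M})\le \delta+1 < \delta+2$, a contradiction. Therefore $C$ uses no minimum-degree coordinate, which is precisely the assertion that $C$ is degenerate in all minimum-degree vertices.

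The one step requiring care, and the main obstacle, is the first: justifying that the combination of conditions $2$ and $3$ of Lemma~\ref{end_cor} genuinely yields a $3$-cycle rather than a closed walk that could collapse (for instance if the length-two path somehow reused the edge $uv$). This is settled by noting that a genuine length-two path has its interior vertex $w$ distinct from both endpoints, so $w\neq u,v$ and the three edges $uv$, $uw$, $wv$ are distinct and form a triangle. Once this is clear, the remainder is the elementary impossibility $\delta+2\le d(\mathcal{M})\le \delta+1$. The resulting corollary strengthens Lemma~\ref{degeneracy_conditions}: under girth at least $5$ it does not suffice that $C$ be degenerate in \emph{some} coordinate, it must be degenerate in \emph{every} minimum-degree coordinate.
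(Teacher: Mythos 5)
Your proof is correct and follows essentially the same route as the paper: both first combine Lemma~\ref{end_cor} with the triangle-extraction argument to force $\Delta^\prime(G)=\delta+1$ under the girth hypothesis, and both then derive a contradiction from the stabilizer generators $S_i$ at minimum-degree vertices, which have weight exactly $\delta+1$. The only difference is packaging: you invoke the coordinate-usage bound $d(\mathcal{M})\le wt(S_i)=\delta+1$ as a black box (the same ``well known'' bound the paper cites in Lemma~\ref{degeneracy_conditions}) and squeeze it against $d(\mathcal{M})\ge\delta+2$, whereas the paper's proof of the corollary re-derives that bound inline by observing that $Cl_S(S_i)=0$ since $S$ is abelian, so by Theorem~\ref{CWS condition} each $S_i$ must commute with every word operator $Z(c)$, which fails when some codeword has $c_t\neq 0$ at a minimum-degree vertex $v_t$.
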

\begin{proof} Note that if $G$ does not contain cycles of short length then by lemma~\ref{end_cor} and argument similar to lemma~\ref{degeneracy_conditions}, $\Delta^\prime(G)=\delta+1$.
Suppose $(G,C)$ is degenerate then it must detect errors of weight $\delta+1$. 
Also for each $i \in v_{min}$, $Cl_S(S_i)= 0$ as $S$ is abelian.
Thus $S_i$ from vertices corresponding to the minimum degree are errors of weight $\delta+1$ with trivial $Cl_S$.
Again from Theorem~\ref{CWS condition}, they must commute with each of the $Z(c)$.
Suppose $C$ was not degenerate in one of the $v_{min}$ vertices labeled as $v_t$, then choose $c$ such that $c_t \neq 0$.
Recall that $S_t$ is in the standard form, and hence,  has $X$ term only at $v_t$ and hence $Z(c_t) S_{t} \neq S_{t} Z(c_t)$ giving the contradiction.
Hence $C$ must be degenerate in all minimum degree vertices. 
\end{proof}
Lemma~\ref{degeneracy_conditions} gives a set of necessary conditions for CWS code to be degenerate.
 It will be interesting to see to what extent the converse holds.
In the reverse direction, we can see that for every classical degenerate code $C$, there exists a graph $G$ such that $(G, C)$ is quantum degenerate. 
Let $C$ be degenerate in $i$-th component, construct a graph on $n$ vertices without any short cycle and small enough minimum degree $\delta$ (label this vertex as $v_i$) and ensure that all the other vertices have a higher degree higher than $\delta+2$.
You can apply Lemma~\ref{main_lemma} repeatedly once on $\Gamma$ including $v_i$ and once excluding $v_i$ to see that such a code will have distance $\delta+2$.
One can easily construct such a graph for $\delta= \Theta\left( \sqrt{n}\right)$ which will give degenerate code with $\Omega(\sqrt{n})$ distance.

\section{\label{sec:further_app}{Further Application}}

Apart from understanding quantum degeneracy, our main lemma can be used as a tool to construct quantum codes.
\paragraph{Searching Codes}
For a given graph $G$, one can potentially try to iteratively select $c_0,c_1,c_2$, and so on, maintaining the property of distance between them, trying to fit as big code as we can.
The spirit of the problem is very similar to that of a packing problem, and the process need not give an optimum $K$ for the obvious reason that at some point, we could have chosen a codeword that does not allow us to reach maximum dimension.
In~\cite{QuditGraph} this problem is marginally circumvented by focusing on nondegenerate codes.
First, creating a lookup table for all the Pauli distances, $\Delta^\prime(G)$ is found.
Construct a graph $\overline{G}$ whose vertices are labeled by graph basis states. 
Fix a parameter $d \leq \Delta^\prime$ and connect two basis states if the distance between them is at least $d$.
A valid quantum codeword (word operators in CWS language) then refers to a clique in $\overline{G}$. 
This then yields a quantum non-degenerate code with distance $d \leq \Delta^\prime$.
Thus for this method, it becomes vital to choose $G$ that has a large $\Delta^\prime$.
Our lemma~\ref{end_cor} is a useful step in this direction.
It shows that $\Delta^\prime$ can be made very close to its upper limit of $\delta+1$.
In fact, for most graphs without a four the cycle, it attains $\delta+1$.
Note that although the method in~\cite{QuditGraph} is better than brute search, it still involves constructing a lookup table and finding a maximum clique that is computationally hard on a general instance.
They raise the question, `can we exploit some properties or structures which can speed things up?' 
Lemma~\ref{end_cor} gives us a way to do so for graphs with no short cycle, and obtain a family quantum non-degenerate codes with $\Theta(\sqrt{n})$.
Since diagonal distance for such graphs reaches $\delta$, a code will detect an error $E$ with a weight less than $\delta$ if the classical code has a distance more than $Cl_S(E)$.
Consider $Cl_S(E)$ for any $E$ of symplectic weight at most $\delta$.
Let the maximum degree of a vertex in $G$ be $\delta_{max}$. 
Then for $Cl_S(E)$ will have hamming weight at most $(\delta+1) \delta_{max}$.
So a classical code that detects $(\delta+1)\delta_{max}$ errors will give a CWS code with distance $\Omega\left( \delta\right)$.
For example, one can start with a classical liner code family with linear distance. After picking up right constants, and choosing a \emph{close to regular} graph $\frac{\sqrt{n}}{c_1} \leq \delta \leq \delta_{max} \leq \frac{\sqrt{n}}{a_2}$, we can get a family of quantum codes with distance $\delta= \Theta\left( \sqrt{n}\right)$, without going through the hassle of the lookup table and clique search.

\section{\label{sec:Conclusion} Conclusion}
We give a complete characterization of the diagonal distance of CWS code $\mathcal{M}= (G, C)$ when the girth of $G$ is more than four.
All of these graphs have high diagonal distance, approaching the upper bound of $\delta+1$ where $\delta$ is the minimum vertex degree of $G$.
Most of these graphs have diagonal distance $\delta+1$, whereas only graphs having a particular strong symmetry have diagonal distance $\delta$. 
We also show that for a CWS code $(G, C)$ to be quantum degenerate, either $C$ must be classically degenerate or $G$ must have a short cycle.
Moreover, we give finer versions of the result by showing where $C$ must be degenerate for $G$ without a short cycle.  
It would be interesting to see to what extent these conditions are tight. 
Can we give a converse by providing necessary conditions for $(G, C)$ to be degenerate? 
Given a graph $(G, C)$, how efficiently can we decide whether it is degenerate or not?

\nocite{*}

\bibliography{dia}

\end{document}